 \newcommand\ForAuthors[1]
\newcommand{\commentaire}[1]{}
\newcommand{\assale}[1]{\textbf{{\color{blue}#1}}}
\newcommand{\ploc}[1]{\textbf{{\color{red}#1}}}
\pgfplotsset{plot coordinates/math parser=false}
\newlength\figureheight
\newlength\figurewidth 
\newcommandx{\ploccomment}[2][1=]{\todo[linecolor=red,backgroundcolor=red!25,bordercolor=red,#1]{#2}}
\newcommand{\iset}[1]{[#1]}
\newcommand{\nn}{\mathbb N}
\newcommand{\rr}{\mathbb R}
\newcommand{\rd}{\rr^d}
\newcommand{\br}{\overline{\mathbb R}}
\newcommand{\FS}{\mathcal{FS}\left(\pp,\br\right)}
\newcommand{\FSrun}{\mathcal{FS}\left(\pprun,\br\right)}
\newcommand{\F}{\mathcal{F}}
\newcommand{\fr}{\mathcal{F}\left(\pp,\br\right)}
\newcommand{\frrun}{\mathcal{F}\left(\pprun,\br\right)}
\newcommand{\frrunr}{\mathcal{F}\left(\pprun,\rr\right)}
\newcommand{\frp}{\mathcal{F}\left(\pp,\rr_{+}\right)}
\newcommand{\frr}{\mathcal{F}\left(\pp,\rr\right)}
\def\norm#1{\mbox{$\| #1 \|$}}
\newcommand{\st}{\operatorname{s.t.}}
\newcommand{\pp}{\mathbb P}
\newcommand{\pprun}{\pp_{\mathrm{run}}}
\newcommand{\rel}[1]{#1^{\mathcal R}}
\newcommand{\sha}[1]{#1^{\sharp}}
\newcommand{\becarre}[1]{#1^{\mathcal{R}}}
\newcommand{\mybrackets}[1]{\big(#1\big)}
\newcommand{\xlive}{X^0}
\newcommand{\op}{\Join}
\newcommand{\rea}{\mathfrak{R}}
\newcommand{\state}{\mathcal{X}}
\newcommand{\laws}{\mathcal{T}}
\newcommand{\xin}{X^{\mathrm{in}}}
\newcommand{\ind}{\mathcal{I}}
\newcommand{\Sol}{\mathrm{Sol}}
\newtheorem{theorem}{Theorem}[section]
\newtheorem{example}{Example}[section]
\newtheorem{definition}{Definition}[section]
\newtheorem{proposition}{Proposition}[section]
\newtheorem{corollary}{Corollary}[section]
\newtheorem{remark}{Remark}
\renewcommand{\phi}{\varphi}
\title{A Sums-of-Squares Extension of Policy Iterations}
\author[1]{Assal\'e Adj\'e\footnote{The work was done when the author was at IRIT, Universit\'e Paul Sabatier at Toulouse and was supported by the CIMI (Centre International de Math\'ematiques et d'Informatique) Excellence program ANR-11-LABX-0040-CIMI within the program
ANR-11-IDEX-0002-02 during a postdoctoral fellowship.}}
\author[2]{Pierre-Lo\"ic Garoche}
\author[3]{Victor Magron}
\affil[1]{
Laboratoire de Math\'ematiques et Physique (LAMPS),\authorcr
Universit\'e de Perpignan, Perpignan, France\authorcr
assale.adje@univ-perp.fr
}
\affil[2]{
Onera, the French Aerospace Lab, France.\authorcr
Universit\'e de Toulouse, F-31400 Toulouse, France.
\authorcr
pierre-loic.garoche@onera.fr
}
\affil[3]{
CNRS Verimag; 2 av de Vignate 38610 Gi\`eres France.\authorcr
victor.magron@imag.fr
}
\date{}
\begin{document}
\maketitle
\begin{abstract}
In order to address the imprecision often introduced by widening operators in static analysis,
policy iteration based on min-computations amounts to considering the
characterization of reachable value set of a program as an iterative computation of
policies, starting from a post-fixpoint. Computing each policy and the
associated invariant relies on a sequence of numerical optimizations. While the
early research efforts relied on linear programming (LP) to address linear properties of linear programs,
the current state of the art is still limited to the analysis of linear programs
with at most quadratic invariants, relying on semidefinite programming (SDP) solvers to compute 
policies, and LP solvers to refine invariants.

We propose here to extend the class of programs considered through the use of
Sums-of-Squares (SOS) based optimization. Our approach enables the precise analysis
of switched systems with polynomial updates and guards. The analysis presented
has been implemented in Matlab and applied on existing programs coming from the system control literature, improving both
the range of analyzable systems and the precision of previously handled ones.


\end{abstract}

\section{Introduction}
\label{introSOS}

A wide set of critical systems software including controller systems, rely on numerical computations. 
Those systems range from aircraft controllers, car engine
control, anti-collision systems for aircrafts or UAVs, to nuclear powerplant
monitors and medical devices such a pacemakers or insulin pumps.

In all cases, the software part implements the execution of an endless
loop that reads the sensor inputs, updates its internal states and controls
actuators. However the analysis of such software is hardly feasible with
classical static analysis tools based on linear abstractions. In fact, according
to early results in control theory from Lyapunov in the 19th century, a linear
system is defined as asymptotically stable iff it satisfies the Lyapunov criterion, i.e.~the
existence of a quadratic invariant. In this view, it is important to develop new analysis tools able to
support quadratic or richer polynomial invariants.


\if{
\begin{wrapfigure}{r}{.35\textwidth}
\vspace{-2em}
\begin{lstlisting}
init
while (cond0) {
  if (cond1) 
    x = poly1(x)
  elseif (cond2) 
    x = poly2(x)
  elseif (cond3) 
    x = poly3(x);
}
\end{lstlisting}
\caption{Programs considered.}
\vspace{-2em}
\label{fig:c_example}
\end{wrapfigure}
}\fi

\begin{figure}[!ht]
\begin{center}
\begin{subfigure}[b]{\textwidth}
\begin{center}
\begin{tabular}{|c|}
\hline
\begin{lstlisting}[mathescape=true]
x $\in$ $\xin$;
while ($r_1^0$(x)$\op$0 and ... and $r_{n_0}^0$(x)$\op$0){
  case ($r_1^1$(x)$\op$0 and ... and $r_{n_1}^1$(x)$\op$0): x = $T^1$(x);
  case ...
  case ($r_1^i$(x)$\op$0 and ... and $r_{n_i}^i$(x)$\op$0): x = $T^i$(x); 
}
\end{lstlisting}
\\
\hline
\end{tabular}
\end{center}
\caption{A one-loop program with a switch-case loop body.}
\label{fig:program}
\end{subfigure}

\vspace{0,5cm}

\begin{subfigure}[b]{\textwidth}
\begin{center}
\fbox{
\begin{minipage}{0.8\textwidth}
\begin{center}
Let $\xlive=\{y\mid r_1^0(y)\op 0 \text{ and } \ldots \text{ and } r_{n_0}^0(y)\op 0\}$

and for all cases $i$, $X^i=\{y\mid r_1^i(y)\op 0 \text{ and } \ldots \text{ and } r_{n_i}^i(y)\op 0\}$.

We can define the discrete-time switched system:
\[
x_0\in\xin,\ \forall\, k\in\nn,\ x_{k+1}=T(x_k), \text{ where } T(x)=T^i(x) \text{ if } x\in X^i\cap \xlive
\]
\end{center}
\end{minipage}
}
\end{center}
\caption{The discrete-time switched system correspondence of the program.}
\label{fig:switched}
\end{subfigure}
\end{center}
\caption{One-loop programs with switch-case loop body and its representation as a switched system.}
\label{fig:c_example}
\end{figure}

While most controllers are linear, their actual implementation is always more
complex. e.g. in order to cope with safety, additional constructs such as
antiwindups or saturations are introduced. Another classical approach is the use
of linear parameter varying controllers (LPV) were the gains of the linear
controller vary depending on conditions: this becomes piecewise polynomial controllers at
the implementation level.

We are interested here in bounding the set of reachable values of such
controllers using sound analyses, that is computing a sound over approximation
of reachable states.  We specifically focus on a class of programs larger  
than linear systems: constrained piecewise polynomial systems. 

  
  A loop is performed
  while a conjunction of polynomial inequalities\footnote{$\op$ is either the
    strict $<$ or the weak ($\leq$) comparison operator over reals.} is
  satisfied. Within this loop, different polynomial updates are performed
  depending on conjunctions of polynomial constraints. This class of programs is
  represented in Fig.~\ref{fig:program}. The program can be analyzed through its switched 
  system representation (see Fig.~\ref{fig:switched}). In the obtained system, the conditions to switch 
  are only governed by the state variable: at each time $k$, we consider the dynamics $T^i$ such that $x_k\in X^i$.
  So, the switch conditions do not depend on the time (we do not consider the time when we reach $X^i$) and are not defined 
  from random variables.
  
  From the point-of-view of the dynamical systems or control theory, to find an over-approximation of reachable values 
  set of a program of the class described in Fig.~\ref{fig:program} can be reduced to find a positive invariant of its 
  switched system representation (see Fig.~\ref{fig:switched}). 
  
  Moreover, the class of switched systems where the switch conditions only depend on the state variable is classical in control theory and includes the nonlinear systems with dead-zone, saturation, resets or hysteresis.



\subsection*{Related Works}

Reachability analysis is a long-standing problem in dynamical systems theory, especially when the system dynamics is nonlinear. In the particular case of polynomial systems, this problem has recently attracted several research efforts.

In~\cite{Wang13}, the authors use a method based on sublevel sets of polynomials to analyze the reachable set of a continuous time polynomial system with initial/general constraints being encoded by semialgebraic sets. Their method relies on a so-called iterative {\em advection algorithm} based on Sums-of-Squares (SOS) and semidefinite programming (SDP) to compute either inner of outer approximation of the backward reachable set, also named {\em domain of attraction} (DoA).  
The stability analysis of continuous-time hybrid systems with SOS certificates was investigated in~\cite{Robust09}.~\cite{Posa16} have recently applied analogous techniques to perform stability analysis and controller synthesis in the context of robotics.
Other studies rely on SOS reinforcement and moment relaxations  to obtain  hierarchies of approximations converging to sets of interest such as the DoA in continuous-time, either from outside in~\cite{henrion2014convex} or from inside in~\cite{KHJ12innerROA}. This approach relies on a linearization of the ordinary differential equation involved in the polynomial system, based on  Liouville's Equation satisfied by adequate occupation measures. This framework has been extended to hybrid systems in~\cite{shia14brs}, as well as to synthesis of feedback controllers in~\cite{majumdar2014convexArxiv}. Liouville's Equation can also be used to approximate other sets of interest, such as the maximal controlled invariant for either discrete- or continuous-time systems (see~\cite{KHJ14mci}).

By contrast with these prior works, our approach focuses on computing over approximation of the forward reachable set of a discrete-time polynomial system with finitely many guards, thanks to an algorithm relying on SOS and template policy iterations.

Template abstractions were introduced
in~\cite{DBLP:conf/sas/SankaranarayananSM04} as a way to define an abstraction
based on an a-priori known vector of templates, i.e. numerical expressions over
the program variables. An abstract element is then defined as a vector of reals
defining bounds $b_i$ over the templates $p_i$: $p_i(x_1, \ldots, x_d) \leq
b_i$.

Initially templates were used in the classical abstract interpretation setting to 
compute Kleene fixpoints with linear functions $p_i$. Typically,
the values of the bound $b_i$ increases during the fixpoint computation until
convergence to a post-fixpoint. 

Later in~\cite{DBLP:conf/esop/AdjeGG10} the authors proposed to consider generalized templates including
 quadratic forms and compute directly the
fixpoint of these template-based abstractions using numerical optimization. When
considering the sub-class of linear programs, fixpoints can
be computed by using a finite sequence of linear (LP) and semidefinite (SDP) optimization problems. 


Two dual approaches, respectively called Max-policies and Min-policies, can be applied. Max-policies~\cite{DBLP:conf/esop/GawlitzaS07} iterate from initial states
and compute \emph{policies} as relaxations through rewriting of an optimization
problem (forgetting about rank conditions). Min-policies~\cite{DBLP:conf/esop/GaubertGTZ07,DBLP:conf/esop/AdjeGG10} rely on duality
principle and determine a policy through the computation of a Lagrange multiplier. 


\subsection*{Contributions}
The present paper is a followup of~\cite{SAS1}, in which Sums-of-Squares (SOS) programming is used to analyze properties (such as boundedness or safety) of piecewise discrete polynomial systems.
The main contribution is the extension of the Min-policy iteration algorithm to improve the precision of the analysis of boundedness for such systems. 
Here, we develop a policy iteration on template domains based on polynomials.
The current approach improves the previous frameworks developed in~\cite{DBLP:conf/esop/GaubertGTZ07,DBLP:journals/corr/abs-1111-5223} to handle affine (or very specific) piecewise affine systems with quadratic templates and semidefinite programming. 
This improvement comes from the use of SOS programming to develop an SOS extension 
of a relaxed functional, sharing the same properties as the one defined in~\cite{DBLP:journals/corr/abs-1111-5223}.
In particular, we show, as in~\cite{aadje_nsv}, that this policy iteration has the desired convergence guarantees.


\subsection*{Organization of the paper}

The paper is structured as follows: we characterize the class of programs
considered -- constrained piecewise discrete-time polynomial systems -- together with  
 their collecting semantics as a least fixpoint. Then, in
Section~\ref{sec:basicinductive-domainsSOS}, we recall definitions of generalized templates, their
expression as an abstract domain and the definition of the abstract transfer
function. Section~\ref{sec:relaxedSOS} proposes an abstraction of the transfer
function using an SOS reinforcement, while Section~\ref{sec:policySOS} relies on this
abstraction to perform policy iteration. Finally Section~\ref{sec:exampleSOS}
presents experiments.



\section{Constrained Piecewise Discrete-Time Polynomial Systems: Definition and
  Collecting Semantics}
\label{sec:contextSOS}
In this paper, we are interested in proving automatically that the set of values taken by the variables 
of the analyzed program is bounded. In the rest of the paper, we analyze 
the program by decomposing it using its dynamic system representation. The boundedness problem is thus reduced to prove
that the set of all the possible trajectories of a dynamical system is bounded. 
Since the analyzed program has the form depicted by Figure~\ref{fig:program},
we consider the special class of discrete-time dynamical systems introduced in Figure~\ref{fig:switched} that is:
\begin{enumerate}[(i)]
\item their dynamic law $T$ is a piecewise polynomial function, and
\item the state variable $x$ is constrained to live in some given basic
  semi-algebraic set\footnote{For instance membership of the sub-level set $\{x \in \rd \mid 1 - \|x\|_2^2 \leq 0\}$, thus this does not entail boundedness of variable values.}.
\end{enumerate} 

We recall that a set is a basic semi-algebraic set if
and only if it can be represented as a conjunction of strict or weak
polynomial inequalities (``basic'' means that no disjunction occurs).
Note that $T$ is a piecewise polynomial function with respect to a given
partition, meaning that if we restrict $T$ to be an element of the partition then $T$
is a polynomial function. 

We now give a formal definition of \emph{constrained piecewise discrete-time polynomial system} (PPS for short).

First to define a PPS, we need a partition. 
Let $\ind$ be a finite set of partition labels and
$\state=\{X^i \subseteq \rd \mid i\in\ind\}$ be a partitioning, that is a given family of
basic semi-algebraic sets satisfying the following: 
\begin{equation}
\label{partition}
\bigcup_{i\in\ind} X^i=\rd,\ \forall\, i,j\in\ind,\ i\neq j \Rightarrow X^i\cap X^j= \emptyset \enspace .
\end{equation}
Each set $X^i$ of the partition corresponds to a case in the loop body of the program given in Figure~\ref{fig:program} as the cases are assumed to be disjoint.
By definition of basic semi-algebraic sets, it follows that for all $i\in\ind$, there exists a
family of $n_i$ polynomials $\{r_j^i,j\in \iset{n_i}\}$ such that:
\begin{equation}
\label{semialgebraic}
X^i=\left\{x\in\rd \mid r_j^{i}(x) \op 0\ \forall\, j\in \iset{n_i}\right\} \,.
\end{equation}
where $\op$ is either $<$ or $\leq$ and $\iset{n_i}$ denotes the set $\{1,\ldots,n_i\}$.
Eq.~\eqref{semialgebraic} matches with the program given in Figure~\ref{fig:program}. 
Guards are defined as conjunctions of polynomial inequalities and thus are basic semi-algebraic sets.

The second tool needed to define a PPS is the piecewise polynomial dynamic relative to the partition. 
Let $T:\rd\mapsto \rd$ be a piecewise polynomial function w.r.t. to the
partitioning $\state$. By definition, there exists a family of polynomials
$\{T^i:\rd\mapsto \rd,i\in \ind\}$ such that for all $i\in\ind$:

\begin{equation}
\label{piecewise}
T(x)=T^i(x),\ \forall x\in X^i\enspace .
\end{equation}
Eq.~\eqref{piecewise} matches with the polynomial updates in Figure~\ref{fig:program}.

Finally, it remains to define the initialization and the set where the state variable lies. Let $\xin$ and $\xlive$ be two basic semi-algebraic sets of $\rd$, $\xin$
supposed to be compact, i.e. closed and bounded. The two sets can be represented
as in Eq.~\eqref{semialgebraic} using their respective family of
$n_{\mathrm{in}}$ and $n_0$ polynomials:
\[
\xin=\left\{x\in\rd \mid r_j^{\mathrm{in}}(x) \op 0\ \forall\, j\in\iset{n_{\mathrm{in}}}\right\}
\text{ and } 
\xlive=\left\{x\in\rd \mid r_j^{0}(x) \op 0\ \forall\, j\in \iset{n_{0}}\right\} \enspace ,
\]
where for all $j\in\iset{n_\mathrm{in}}$, $r_j^\mathrm{in}:\rd\mapsto \rr$ and for all $k\in\iset{n_0}$, 
$r_k^0:\rd\mapsto \rr$ is a polynomial.

The set $\xin$ and $\xlive$ respectively denote the set of initial states of the program and the set 
which defines the loop condition in Figure~\ref{fig:program}.

Let $\state$ be the family of sets $\{X^i, i\in \ind\}$ satisfying Eq.~\eqref{partition} and $\laws$ be the family of functions $\{T^i, i\in\ind\}$ satisfying Eq.~\eqref{piecewise}.
We define the PPS associated to the quadruple $(\xin,\xlive,\state,\laws)$
as the system satisfying the following dynamic:
\begin{equation}
\label{pws}
x_0\in \xin, \text{ and } \forall\, k\in\nn, \text{if } x_k\in X^0,\ x_{k+1}=T(x_k) \,.
\end{equation}
In the rest of the paper, a PPS dynamical system is identified by a quadruple $(\xin,\xlive,\state,\laws)$.
\begin{example}[Running example]
\label{running}
We consider the following running example corresponding to a slightly modified version
of~\cite[Example 3]{DBLP:conf/cdc/AhmadiJ13}. By comparison with~\cite[Example 3]{DBLP:conf/cdc/AhmadiJ13}, the semi-algebraic set $X^1$ (resp.~$X^2$) is introduced to represent conditions under which we use the polynomial update $T^1$ (resp.~$T^2$). 
The PPS is the quadruple $(\xin,X^0,\{X^1,X^2\},\{T^1,T^2\})$, where:\\
\[
\begin{array}{lcl}
  \begin{array}{l}
    \xin= [-1, 1] \times [-1, 1]\\
    X^0=\rr^2 
  \end{array}
  &  \text{ and }& 
  \left\{
    \begin{array}{l}
      X^1=\{x\in\rr^2\mid -x_1^2+1\leq 0\} \\
      X^2= \{x\in\rr^2\mid x_1^2-1< 0\}    
    \end{array}
  \right.
\end{array}
\]
  \\
\noindent
and the family of functions $\{T^1,T^2\}$, defined as follows:
\[
\begin{array}{c}
T^1(x_1,x_2)=\begin{pmatrix} 
0.687x_1+0.558x_2-0.0001x_1 x_2 \\ 
-0.292x_1+0.773x_2
\end{pmatrix}\\
\text{and}\\
T^2(x_1,x_2) =\begin{pmatrix} 
0.369x_1+0.532x_2-0.0001x_1^2\\
-1.27x_1+0.12x_2-0.0001x_1x_2
\end{pmatrix}
\end{array}
\]
Its (discretized) reachable value set is simulated and depicted at Figure~\ref{fig:running_ex_traces}.
\begin{figure}[ht!]
 \centering
 \includegraphics[width=.5\textwidth]{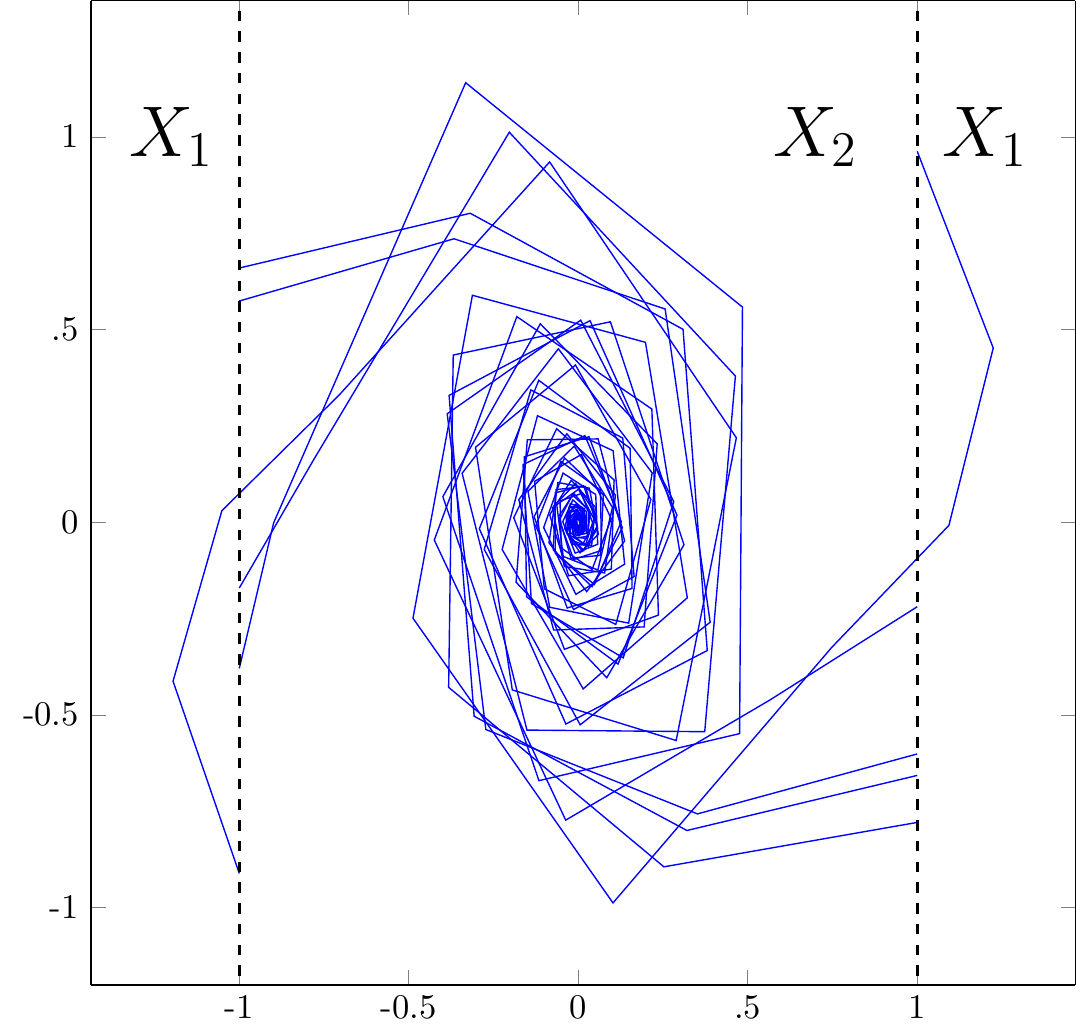}
    \caption{Running example simulation}
  \label{fig:running_ex_traces}
\end{figure}
\end{example}

We recall that our objective is to prove automatically that the set of the
possible trajectories of the system is bounded. This set, also called the reachable values set or the collecting
semantics of the system, is defined as follows:
\begin{equation}
\label{reachable}
\rea = \bigcup_{k \in \nn} T_{|_{\xlive}}^{(k)}(\xin)
\end{equation}
where $T_{|_{\xlive}}$ is the restriction of $T$ over $\xlive$.
The notation $T_{|_{\xlive}}^{(k)}$ stands for composing $k$-times the map $T_{|_{\xlive}}$.
\commentaire{
\ploccomment[inline]{Attention, ici $^k$ veut dire déroulé k fois, dans les
  pages précedentes, ça veut dire le $k$ieme element de la partition}
\ploccomment[inline]{C'est aussi implicitement le lift aux ensemble: $T(X) = \{
  T(x) | x \in X\}$ et donc $T_{|_{\xlive}} (X) = \{ ... \} \cap \xlive$}
 }
 
To prove this boundedness property, we can compute $\rea$ and do some analysis to prove that $\rea$ is
bounded. Nevertheless, the computation of $\rea$ cannot be done in general and instead, 
we have to compute an over-approximation of $\rea$ and show that this approximation is bounded. 

The usual abstract interpretation methodology to characterize and to construct this over-approximation
relies on the representation of $\rea$ as the smallest fixed point of a
monotone map over a complete lattice. 
In other words, $\rea$ satisfies:
\[
\rea=T(\rea\cap \xlive)\cup \xin=\bigcup_{i\in\ind} T^i(\rea\cap \xlive\cap X^i)\cup \xin \enspace.
\]
Let us define $\wp(\rd)$ the set of subsets of $\rd$ and introduce the function $F:\wp(\rd)\mapsto \wp(\rd)$ defined for all $C\in\wp(\rd)$ by:
\begin{equation}
\label{semantics}
F(C)=T(C\cap \xlive)\cup \xin=\bigcup_{i\in\ind} T^i(C\cap \xlive\cap X^i)\cup \xin \enspace.
\end{equation}
Thus, $\rea$ is the smallest fixed point of $F$ and from Tarski's Theorem, since
$F$ is monotone and $\wp(\rd)$ is a complete lattice:
\begin{equation}
\label{eq:fixpoint}
\rea=\min\{C\in\wp(\rd)\mid F(C)\subseteq C\}\enspace.
\end{equation}
Finally to compute an over-approximation of $\rea$ it suffices to compute a set $C$ such that $F(C)\subseteq C$.
A set $C$ which satisfies $F(C)\subseteq C$ is called an \emph{inductive invariant}\footnote{In the dynamical systems theory, the inductive invariant sets are called positive invariant.}.

The rest of the paper addresses the computation of a sound over-approximation of
$\rea$ using its definition as the smallest fixpoint of $F$ (Eq.~\eqref{eq:fixpoint}).



\section{Basic Semi-algebraic Inductive Invariants Set}
\label{sec:basicinductive-domainsSOS}
An easy way to over-approximate the set of reachable values is to restrict the set of inductive invariants that we consider.
We propose to restrict the class of such invariants to basic semi-algebraic sets using template abstractions.
A template abstraction consists of representing a given set as the intersection of sublevel sets of \emph{a-priori fixed}
functions depending on the state variables. Such functions are called \emph{templates}. Then computing an inductive invariant in the templates domain boils down to providing, for each template $p$, a bound $w(p)$ such that the intersection over the templates $p$ of sublevel sets $\{x\in\rd\mid p(x)\leq w(p)\}$ is an inductive invariant. In our context, a template is simply an {\em a-priori} fixed multivariate polynomial. 

The overall method is not new and corresponds to a specialization of the templates abstraction (see~\cite{DBLP:conf/esop/AdjeGG10,DBLP:journals/corr/abs-1111-5223})
to polynomial templates. However, in practice, the method developed in~\cite{DBLP:conf/esop/AdjeGG10,DBLP:journals/corr/abs-1111-5223} is restricted to template polynomials 
of degree 2 (quadratic forms) and affine systems or a very restricted class of piecewise affine systems. 

Next we give formal details about the polynomial template abstraction and the equations that must satisfy the template bounds vector $w$ to generate an inductive invariant.
From now on, we denote by $\pp$ the set of templates and by $\fr$ the set of functions from $\pp$ to $\br=\rr\cup\{-\infty,+\infty\}$. We equip $\fr$ with the functional 
partial order $\leq_\F$ i.e. $v\leq_\F w$ iff $v(p)\leq w(p)$ for all $p\in\pp$.  
Let $w\in\fr$. The sets that we consider are of the form:
\begin{equation}
\label{eq:basicinductive}
w^\star=\{x\in\rd\mid p(x)\leq w(p),\forall\, p\in\pp\} \,.
\end{equation}

\begin{example}
\label{semialgebraicrunning}
Let us define $q_1(x)=q_1(x_1,x_2)=x_1^2$, $q_2(x)=q_2(x_1,x_2)=x_2^2$ and let us consider a well-chosen polynomial $p$ of degree 6. We will explain in Subsection~\ref{initsub} how to generate automatically this template $p$. Let us define $\pprun:=\{q_1,q_2,p\}$.

Consider the function $w^0$ over $\pprun$, $w^0(q_1)=w^0(q_2)=2.1391$ and $w^0(p)=0$, the set ${w^0}^\star=\{(x_1,x_2)\in\rr^2\mid x_1^2\leq w^0(q_1),\ x_2^2\leq w^0(q_2),\ p(x)\leq w^0(p)\}$ 
is presented in Figure~\ref{invs}.

Now let us take the function $w^1$ over $\pprun$ defined by $w^1(q_1)=1.5503$,
$w^1(q_2)=1.9501$ and $w^1(p)=0$, the set ${w^1}^\star=\{(x_1,x_2)\in\rr^2\mid
x_1^2\leq w^1(q_1),\ x_2^2\leq w^1(q_2),\ p(x)\leq w^1(p)\}$ is also presented in Figure~\ref{invs}.

\begin{figure}[h]
  \begin{center}
    \includegraphics[width=.5\textwidth]{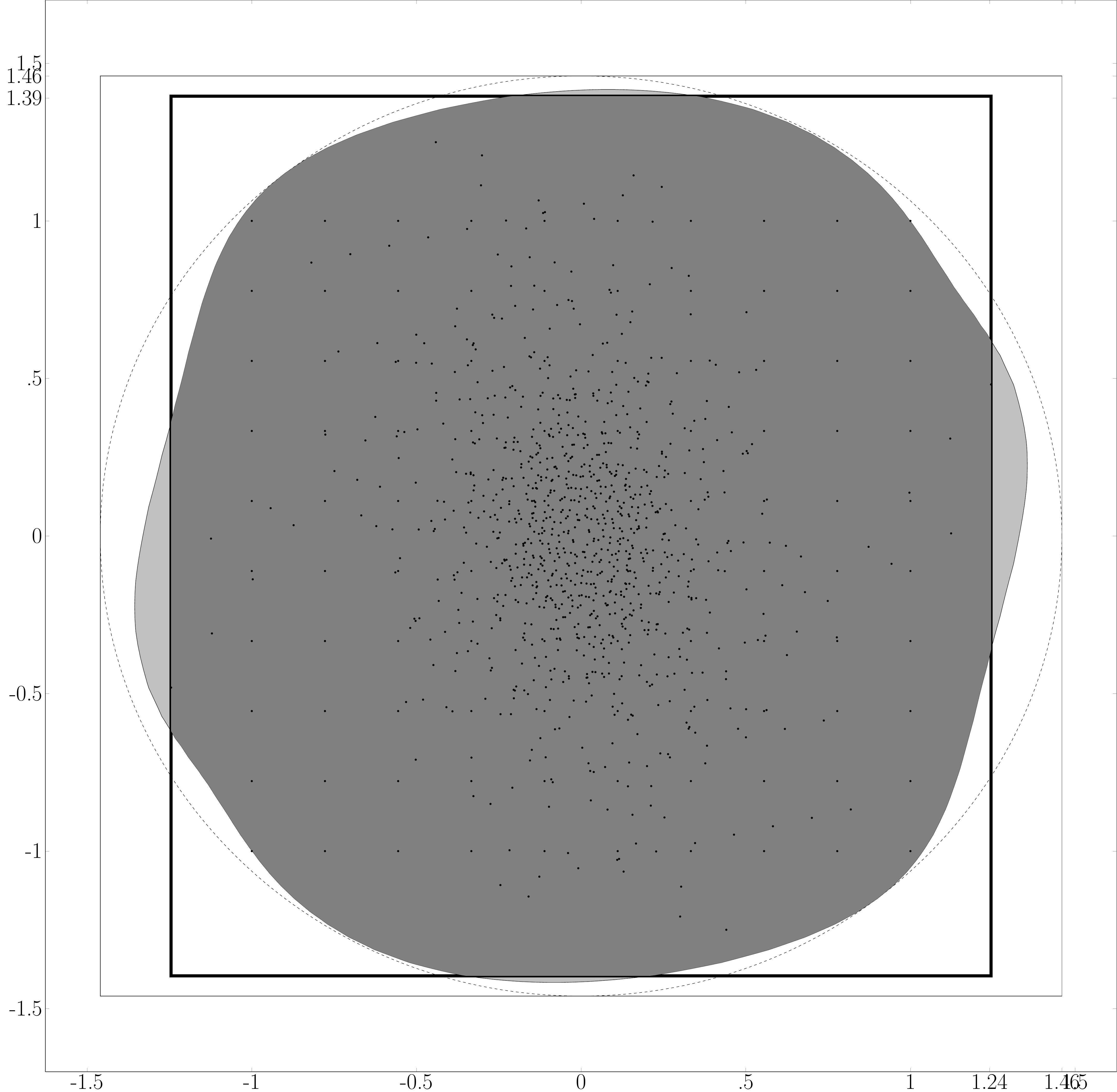}
  \end{center}

The semialgebraic set of gray denotes the set ${w^0}^\star$ of
Example~\ref{semialgebraicrunning} since $p(x) \leq 0$ is included in $x_i^2
\leq 2.1391$, for $i \in [1,2]$. The dark gray region denotes the semialgebraic
set ${w^0}^\star$. Black dots are actual reachable states of $\rea$ obtained by simulation.
\caption{Semialgebraic sets $w^\star$ for Example~\ref{semialgebraicrunning} }
\label{invs}
\end{figure}

Since the set of black dots in Figure~\ref{invs} belongs to ${w^0}^\star$ and ${w^1}^\star$, we guess that ${w^0}^\star$ and ${w^1}^\star$ contain both the reachability set of the system from Example~\ref{running}. 
To formally prove it, one way is to show that they are also inductive invariants for this system. 
\end{example}
We restrict the class of inductive invariants to those of the form~\eqref{eq:basicinductive} 
and characterize the inductiveness for such sets. Since 
each polynomial template $p$ is fixed, the considered variables that we handle are the template bounds
$w\in\fr$. Therefore, we need to translate the inductiveness of the sets $w^\star$ into inequalities 
on $w$. By definition the set $w^\star$ is an inductive invariant iff $F(w^\star)\subseteq w^\star$, that is:
\[
\bigcup_{i\in\ind} T^i(w^\star\cap \xlive\cap X^i)\cup \xin \subseteq w^\star\enspace .
\]
By definition, $w^\star$ is an inductive invariant iff:
\[
\forall\, p\in\pp,\ \forall x\in\bigcup_{i\in\ind} T^i(w^\star\cap \xlive\cap X^i)\cup \xin,\ p(x)\leq w(p) \enspace .
\]
Using the definition of the supremum, $w^\star$ is an inductive invariant iff:
\[
\forall\, p\in\pp,\ \sup_{\displaystyle{x\in\bigcup_{i\in\ind} T^i(w^\star\cap \xlive\cap X^i)\cup \xin}} p(x)\leq w(p) \enspace .
\]
Now, let us consider $p\in\pp$. Using the fact that for all $A,B\subseteq \rd$ and for all functions $f$, $\displaystyle{\sup_{A\cup B} f=\sup\{\sup_{A} f,\sup_B f\}}$:
\[
 \sup_{x\in\bigcup_{i\in\ind} T^i(w^\star\cap \xlive\cap X^i)\cup \xin}
 p(x)=\sup\left\{\sup_{i\in\ind}\; \sup_{x\in T^i(w^\star\cap \xlive\cap X^i)} p(x),\sup_{x\in \xin} p(x)\right\} \enspace .
\]
By definition of the image:
\[
 \sup_{x\in\bigcup_{i\in\ind} T^i(w^\star\cap \xlive\cap X^i)\cup \xin} p(x)=\sup\left\{\sup_{i\in\ind}\;\sup_{y\in w^\star\cap \xlive\cap X^i} p(T^i(y)),\sup_{x\in \xin} p(x)\right\} \enspace .
\]
Next, we introduce the following notation, for all $p\in\pp$:
\[
\sha{F_{i}}(w)(p) := \sup_{x\in w^\star\cap X^i\cap \xlive} p(T^i(x))\quad \text{ and } \quad {\xin}^\dagger(p) := \sup_{x\in \xin} p(x)\enspace.
\]
Finally, we define the function from $\fr$ to itself, for all $w\in\fr$: 
\[
\sha{F}(w) := \sup\left\{\sup_{i\in \ind} \sha{F_i}(w),{\xin}^\dagger\right\} \,.
\]
Note that $\sha{}$, ${}^\dagger$ correspond exactly to the notations used in~\cite{DBLP:conf/esop/AdjeGG10}.
By construction we obtain the following proposition:
\begin{proposition}
\label{boundineq}
Let $w\in\fr$. Then $w^\star$ is an inductive invariant (i.e. $F(w^\star)\subseteq w^\star$) iff $\sha{F}(w)\leq_\F w$.
\end{proposition}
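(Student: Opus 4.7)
The proof is essentially a chain of equivalences that unwinds the definitions; in fact the excerpt immediately preceding the proposition has already carried out every key step, so the plan is to package that derivation into a two-direction argument.

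First I would reduce the set inclusion $F(w^\star)\subseteq w^\star$ to a family of scalar inequalities. By the very definition~\eqref{eq:basicinductive} of $w^\star$, a subset $A\subseteq\rd$ satisfies $A\subseteq w^\star$ if and only if for every template $p\in\pp$ and every $x\in A$ one has $p(x)\leq w(p)$; taking suprema, this is equivalent to $\sup_{x\in A}p(x)\leq w(p)$ for all $p\in\pp$ (with the convention $\sup\emptyset=-\infty$ in $\br$, which makes the statement vacuous when $A$ is empty). Applying this with $A=F(w^\star)$ yields: $w^\star$ is an inductive invariant iff, for all $p\in\pp$,
\[
\sup_{x\in F(w^\star)} p(x)\ \leq\ w(p)\,.
\]

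Next I would evaluate the left-hand side using the explicit description~\eqref{semantics} of $F$. Using that $\sup_{A\cup B}f=\sup\{\sup_A f,\sup_B f\}$ for any function $f$, together with the same property iterated over the finite union indexed by $\ind$, I split
\[
\sup_{x\in F(w^\star)} p(x)\ =\ \sup\Bigl\{\,\sup_{i\in\ind}\ \sup_{x\in T^i(w^\star\cap\xlive\cap X^i)} p(x),\ \sup_{x\in\xin} p(x)\Bigr\}\,.
\]
Rewriting the inner suprema by the change of variable rule $\sup_{x\in T^i(C)}p(x)=\sup_{y\in C}p(T^i(y))$ and invoking the definitions of $\sha{F_i}(w)(p)$ and ${\xin}^\dagger(p)$ introduced just above the proposition statement, the right-hand side is exactly $\sha{F}(w)(p)$.

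Combining the two reductions, $F(w^\star)\subseteq w^\star$ is equivalent to $\sha{F}(w)(p)\leq w(p)$ for every $p\in\pp$, which is the definition of $\sha{F}(w)\leq_\F w$ for the functional order on $\fr$. There is no real obstacle; the only points that deserve a careful line are the convention on suprema in $\br$ (ensuring the equivalence is not broken when $w^\star\cap\xlive\cap X^i=\emptyset$ for some $i$) and the justification that the finite-union formula for suprema propagates through the indexing by $\ind$, which is immediate since $\ind$ is finite.
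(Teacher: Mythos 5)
Your proposal is correct and follows essentially the same route as the paper: the paper's "proof" is precisely the chain of equivalences preceding the statement (pointwise reformulation of the inclusion, passage to suprema, splitting the supremum over the finite union, and the image change-of-variable identifying $\sha{F_i}$ and ${\xin}^\dagger$), which is why the proposition is introduced with "By construction". Your extra remarks on the convention $\sup\emptyset=-\infty$ and the finiteness of $\ind$ are sound but add nothing beyond the paper's argument.
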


From Prop.~\ref{boundineq}, $\inf\{w\in\fr^n\mid \sha{F}(w)\leq_\F w\}$ identifies the smallest inductive invariant $w^\star$ of the form~\eqref{eq:basicinductive}. 

\begin{example}
Let us consider the system defined at Example~\ref{running}. Let us consider the same templates basis $\pprun$ from Example~\ref{semialgebraicrunning} i.e. $\pprun=\{q_1,q_2,p\}$ where $q_1(x)=x_1^2$, $q_2(x)=x_2^2$ and $p$ is a well-chosen polynomial of degree 6. Let $w\in\frrun$. For $i=1$ and the templates $q_1$, we have:
\[
\sha{F_1}(w)(q_1)=\sup_{\substack{-x_1^2+1\leq 0\\ x_1^2\leq w(q_1),\ x_2^2\leq w(q_2),\ p(x)\leq w(p)}} (0.687x_1+0.558x_2-0.0001x_1 x_2)^2 \,.
\]
Indeed, $X^1=\{x\in\rr^2\mid -x_1^2+1\leq 0\}$ and $\xlive=\rr^2$ and the dynamics associated with $X^1$ is the polynomial function $T^1$ defined for all $x\in\rr^2$ by:
$T^1(x) =\left(\begin{smallmatrix} 
0.687x_1+0.558x_2-0.0001x_1 x_2 \\ 
-0.292x_1+0.773x_2
\end{smallmatrix}\right)$. Since $q_1$ computes the square of the first coordinates, this yields $q_1(T^1(x))=(0.687x_1+0.558x_2-0.0001x_1 x_2)^2$.

\end{example}

With $w\in\fr$, computing $\sha{F}(w)$ boils down to solving a finite number of nonconvex polynomial optimization problems. General methods do not exist to solve such problems. 
In Section~\ref{sec:relaxedSOS}, we propose a method based on Sums-of-Squares (SOS) to over-approximate $\sha{F}(w)$.


\section{SOS-based Relaxed Semantics }
\label{sec:relaxedSOS}
In this section, we introduce the relaxed functional on which we will compute a fixpoint, yielding a further over-approximation of the set $\rea$ of reachable values. This relaxed functional is constructed from 
a Lagrange relaxation of maximization problems involved in the evaluation of $\sha{F}$ and 
Sums-of-Squares strengthening of polynomial nonnegativity constraints. 
First, we recall mandatory background related to Sums-of-Squares and their application in polynomial optimization. 
The interested reader is referred to~\cite{lasserre2009moments} for more details.

\subsection{Sums-of-Squares Programming}
\label{sec:sos}

Let $\rr[x]_{2m}$ stands for the set of polynomials of degree at most $2 m$ and
$\Sigma[x] \subset \rr[x]$ be the cone of Sums-of-Squares (SOS) polynomials,
that is $\Sigma[x] := \{\,\sum_i q_i^2, \, \text{ with } q_i \in \rr[x] \,\}$. 

Our work will use the simple fact that 
for all $q\in\Sigma[x]$, then $q(x)\geq 0$ for all $x\in\rd$ as the set $\Sigma[x]$ contains 
only nonnegative polynomials. 
In other words, for any given polynomial $q$, we can strengthen the constraint of $q$ being nonnegative into the existence of an SOS decomposition of $q$.

%
For $q \in \rr[x]_{2m}$,  finding an SOS decomposition $\sum_i q_i^2 = q$ valid
over $\rd$ is equivalent to solve the following matrix linear feasibility problem:
\begin{align}
\label{eq:sdp}
q(x) = b_m(x)^T \, Q \, b_m(x) \,, \quad \forall x \in \rd, \,
\end{align}
where $b_m(x) := (1, x_1, \dots, x_d, x_1^2,x_1 x_2,\dots, x_d^m)$ (the vector
of all monomials in $x$ up to degree $m$) and  the Gram matrix $Q$,
being a {\em semidefinite positive} matrix (i.e.~all the eigenvalues of $Q$ are nonnegative). The size of $Q$ (as well as the length of $b_m$) is ${d + m \choose d}$. 

\begin{example}
Consider the bi-variate polynomial $q(x) := 1 + x_1^2 - 2 x_1 x_2 + x_2^2$. With $b_1 (x) = (1, x_1, x_2)$, one looks for a semidefinite positive matrix $Q$ such that the polynomial equality $q(x) = b_1(x)^T \, Q \, b_1(x)$ holds for all $x \in \rr^2$. 
The matrix \[Q = 
\begin{pmatrix}
1 & 0 & 0\\
0 & 1 & -1\\
0 & -1 & 1
\end{pmatrix}\] satisfies this equality and has three nonnegative eigenvalues,
which are 0, 1, and 2, respectively associated to the three eigenvectors $e_0 :=
(0, 1/\sqrt{2}, 1/\sqrt{2})^\intercal $, $e_1 := (1, 0, 0)^\intercal $ and $e_2 := (0, 1/\sqrt{2}, -1/\sqrt{2})^\intercal$. Defining the matrices $L
:= (e_1 \, e_2 \, e_0)=\begin{psmallmatrix}1 & 0 & 0\\ 0 & \frac{1}{\sqrt{2}} & \frac{1}{\sqrt{2}} \\ 0 & -\frac{1}{\sqrt{2}} & \frac{1}{\sqrt{2}} \end{psmallmatrix}$ and $D = \begin{psmallmatrix} 1 & 0 & 0 \\ 0 & 2 & 0 \\ 0 & 0 & 0 \end{psmallmatrix}$, one obtains the decomposition $Q = L^\intercal \, D \, L$ and the equality $q(x) = (L \,
b_1(x))^T \, D \, (L \, b_1(x)) = \sigma (x) = 1 + (x_1 - x_2)^2$, for all $x
\in \rr^2$. The polynomial $\sigma$ is called an {\em SOS certificate} and
guarantees that $q$ is nonnegative.
\end{example}
In practice, one can solve the general problem~\eqref{eq:sdp} by using
semidefinite programming (SDP) solvers (e.g. {\sc Mosek}~\cite{mosek},
SDPA~\cite{Yamashita10SDPA}, CSDP~\cite{csdp99borchers}). For more details about SDP, we refer the interested reader to~\cite{Vandenberghe94semidefiniteprogramming}.

The SOS reinforcement of polynomial optimization problems consists of restricting polynomial nonnegativity to being an element of $\Sigma[x]$. In case of polynomial maximization problems, the SOS reinforcement boils down to computing an upper bound 
of the real optimal value. For example let $p \in \rr[x]$ and consider the unconstrained polynomial maximization problem $\sup\{p(x),\ x\in\rd\}$. 
Applying SOS reinforcement, we obtain:
\begin{equation}
\label{sosapplicationunconstrained}
\sup\{p(x),\ x\in\rd\}=\inf\{\eta\mid \eta-p(x)\geq 0\}\leq \inf\{\eta\mid \eta-p(x)\in\Sigma[x]\} \enspace .
\end{equation}
Now, let $p,q\in \rr[x]$ and consider the constrained polynomial maximization problem: $\sup \{  p(x) | q(x) \leq 0,\ x\in\rd\}$.
Let $\lambda\in\Sigma[x]$, then:
\[
  \sup_{
      q(x) \leq 0,\
      x\in\rd
    } \quad
  p(x)\leq 
  \sup_{%
    \substack{%
      x \in \rd}} \quad
  p(x) - \lambda(x) \cdot q(x)\enspace .
\] 
Indeed, suppose $q(x)\leq 0$, then $-\lambda(x)q(x)\geq 0$ and $p(x)\leq p(x)-\lambda(x)q(x)$.
Finally taking the supremum over $\{x\in\rd\mid q(x)\leq 0\}$ provides the above inequality.
Since $\sup\{p(x) - \lambda(x) \cdot q(x),\ x\in\rd\}$ is an unconstrained polynomial maximization problem then we apply 
an SOS reinforcement (as in Eq.~\eqref{sosapplicationunconstrained}) and we obtain:  
\[
\sup_{
      q(x) \leq 0,\ x\in\rd
    } \quad
  p(x)\leq 
  \sup_{%
      x \in \rd} \quad
  p(x) - \lambda(x) \cdot q(x) 
  \leq \inf\{\eta\mid \eta-p-\lambda q\in \Sigma[x]\} \,.
  \]
  Finally, note that this latter inequality is valid whatever $\lambda\in\Sigma[x]$ and so we can take the infimum 
  over $\lambda\in\Sigma[x]$ which leads to:
  \begin{equation}
  \label{SOSrelaxationconstrained}
\sup_{
      q(x) \leq 0,\ x\in\rd
    } \quad
  p(x)\leq \inf_{\lambda\in\Sigma[x]}
  \sup_{%
      x \in \rd} \quad
  p(x) - \lambda(x) \cdot q(x) 
  \leq \inf_{\substack{\eta-p-\lambda q\in \Sigma[x]\\ \lambda\in\Sigma[x]}} \quad \eta \,.
  \end{equation}
  
  In Eq.~\eqref{SOSrelaxationconstrained}, $\lambda$ is an SOS polynomial but to exploit linear programming solvers in policy iterations (see the fourth assertion of Prop.~\ref{phiproperty}) 
  we restrict $\lambda$ to be a nonnegative scalar and in this case, since positive scalars are sum-of-squares polynomials of degree 0, we obtain a
  safe over-approximation of the right-hand-side of Eq.~\eqref{SOSrelaxationconstrained}.

  In presence of several constraints, we assign to each constraint an element $\sigma\in\Sigma[x]$, and we consider 
  the product of $\sigma$ with its associated constraint and then the sum of all such products. This sum is finally added to the objective 
  function. 
  
The use of such SOS polynomials for constrained polynomial optimization
problem can be seen as a generalization of the S-procedure from~\cite{Yakubovich71}. We refer to~\cite{Las01moments} or~\cite{Parrilo2003relax} for applications in control. Note that the existence of SOS decompositions of positive polynomials over compact sets is ensured by the Putinar Positivstellensatz from~\cite{Putinar1993positive}.

\subsection{Relaxed semantics}
The computation of $\sha{F}$ as a polynomial maximization problem cannot be 
directly performed using numerical solvers. We use the SOS reinforcement mechanisms
described above to relax the computation and characterize an abstraction of
$\sha{F}$. 

We still assume the knowledge of the template basis $\pp$, involving polynomials of degree at most $2 m$.
Let us define $\frp$ the set of nonnegative functions over $\pp$ i.e. $g\in\frp$ iff for all $p\in\pp$, $g(p)\in\rr_+$.
Let $p\in\pp$ and $w\in\fr$. Starting from the definition of $\sha{F_i}$, one obtains the following:
\allowdisplaybreaks[4]
\[
\begingroup
\begin{array}[t]{lcl}
\mybrackets{\sha{F_i}(w)}(p) &=&
\displaystyle\sup_{\substack{q(x)\leq w(q),\ \forall q\in\pp\\ r_j^i(x) \leq 0,\ \forall\, j\in\iset{n_i}\\ r_k^0(x)\leq 0,\ 
\forall\, k\in\iset{n_0}}} \quad p(T^i(x))\\
&\leq&
\begin{array}[t]{lll}
\displaystyle{\inf_{\substack{\lambda\in\frp\\ \sigma \in\Sigma[x], \mu_l\in\Sigma[x],\gamma_l\in\Sigma[x] \\ \deg (\sigma) \leq 2 m \deg T^i \\ \deg (\mu_l r_l^i) \leq 2 m \deg T^i \\ \deg (\gamma_l r_l^0) \leq 2 m \deg T^i }}}&\displaystyle{\sup_{x\in \rd}} &\displaystyle{ p(T^i(x))+\sum_{\mathrm{q \in \pp}}\lambda(q)
(w(q) - q(x))} \\[-4em]
&&\displaystyle{- \sum_{\mathrm{l=1}}^{\mathrm{n^i}}\mu_l(x) r_l^i(x) -  
\sum_{\mathrm{l=1}}^{\mathrm{n^0}}\gamma_l(x) r_l^0(x)}
\end{array}\vspace{1.5em}
\\
%
%
&\leq& 
\begin{array}[t]{cl}\displaystyle{\inf_{\lambda,\sigma,\mu_l,\gamma_l,\eta}} &\eta\\
\st&
\left\{
\begin{array}{l}
\displaystyle{\eta- p\circ T^i-\sum_{\mathrm{q \in \pp}}\lambda(q) (w(q) - q)+ \sum_{\mathrm{l=1}}^{\mathrm{n^i}}\mu_l r_l^i +  
\sum_{\mathrm{l=1}}^{\mathrm{n^0}}\gamma_l r_l^0} = \sigma \,,\\
\lambda\in\frp,\ \sigma\in\Sigma[x],\ \mu_l\in\Sigma[x],\ \gamma_l\in\Sigma[x],\ \eta\in \rr \,, \\
\deg (\sigma) \leq 2 m \deg T^i \,, \\
\deg (\mu_l r_l^i) \leq 2 m \deg T^i ,\ \deg (\gamma_l r_l^0) \leq 2 m \deg T^i \,.
\end{array}
\right.
\end{array}\\
\multicolumn{3}{r}{\textrm{(using an SOS reinforcement to remove the sup)}}\\
\end{array}
\endgroup
\]
We denote by $\Sigma[x]^n$ the set of $n$-tuples of SOS polynomials.
For clarity purpose, the dependency on $i$ is omitted within the notations of the multipliers $\mu_l$ and $\gamma_l$.
Moreover, let us write $\sum_{l=1}^{n^i} \mu_l r_l^i$ (resp. $\sum_{\mathrm{l=1}}^{\mathrm{n^0}}\gamma_l r_l^0$) as $\langle\mu, r^i\rangle$ (resp. $\langle\gamma, r^0\rangle$). Finally, we write $\mybrackets{\becarre{F_i}(w)}(p)$ the over-approximation of $\mybrackets{\sha{F_i}(w)}(p)$, defined as follows:
\begin{equation}
\label{relaxedpoly}
\begin{array}{ll}
 \mybrackets{\becarre{F_i}(w)}(p) =  &\displaystyle{\inf_{\lambda,\sigma, \mu,\gamma,\eta}} \eta\\
&\st
\left\{
\begin{array}{l}
\displaystyle{\eta- p\circ T^i-\sum_{\mathrm{q \in \pp}}\lambda(q) (w(q) - q) + \langle\mu, r^i\rangle + \langle\gamma, r^0\rangle} = \sigma\\
\lambda\in\frp,\ \sigma\in\Sigma[x] ,\ \mu\in\Sigma[x]^{n_i},\ \gamma\in\Sigma[x]^{n_0},\ \eta\in \rr \,,\\
\deg (\sigma) \leq 2 m \deg T^i ,\ \deg (\langle\mu, r^i\rangle + \langle\gamma, r^0\rangle) \leq 2 m \deg T^i \,.
\end{array}
\right.
\end{array}
\end{equation}
In Equation~\eqref{relaxedpoly}, the notation $\lambda$ is a vector of Lagrange multipliers. Each multiplier is associated with a constraint constructed from a template i.e. a constraint $q(x)-w(q)\leq 0$. We also introduce the vector of SOS polynomials $\mu$ and $\gamma$. Their role is to take into account the presence of the constraints $x\in X^i$ and $x\in \xlive$ in the computation of $ \mybrackets{\becarre{F_i}(w)}(p)$. Recall that $X^i$ and $\xlive$ are basic semi-algebraic sets, then the size of the vectors $\mu$ 
and $\gamma$ are equal to the number of polynomials defining $X^i$ and $\xlive$. 

We conclude that, for all $i\in\ind$, the evaluation of $\becarre{F_i}$ can be done using SOS programming, since it 
is reduced to solve a minimization problem with a linear objective function and
linear combination of polynomials constrained to be sum-of-squares.

Note that $\rel{F_i}$ defined at Eq.~\eqref{relaxedpoly} is the SOS extension of the relaxed function defined in~\cite{DBLP:journals/corr/abs-1111-5223}. Indeed, considering the special case where $T^i$ is affine, the templates $p,q$ and the test functions $r^i$, $r^0$ are quadratic, the vectors 
$\mu^i$ and $\gamma^i$ are restricted to be nonnegative scalars, 
then $\rel{F_i}$ corresponds to the relaxed function defined in~\cite{DBLP:journals/corr/abs-1111-5223} at Eq.~(3.12).
%
\begin{example}
\label{examplerunning2}
We still consider the running example defined at Example~\ref{running} and take again the same templates basis $\pprun$ of Example~\ref{semialgebraicrunning} composed of $q_1:x\mapsto x_1^2$ and $q_2:x\mapsto x_2^2$. and a well-chosen polynomial $p$ of degree 6. For the index of the partition $i=1$.
Recall that $T^1(x) =\left(\begin{smallmatrix} 
0.687x_1+0.558x_2-0.0001x_1 x_2 \\ 
-0.292x_1+0.773x_2
\end{smallmatrix}\right)$ and  $X^1=\{x\in\rr^2\mid -x_1^2+1\leq 0\}$ and thus $r_1^1(x)=-x_1^2+1$. Let $w\in \frrun$,
then:
\[ 
\begin{array}{l}
\mybrackets{\becarre{F_1}(w)}(q_1) =\\
\displaystyle{\inf_{\lambda,\sigma, \mu,\eta}} \eta\\
\st
\left\{
\begin{array}{l}
\eta- (0.687x_1+0.558x_2-0.0001x_1x_2)^2-\lambda(q_1) (w(q_1) - x_1^2)\\
-\lambda(q_2) (w(q_2) - x_2^2)-\lambda(p) (w(p) - p(x)) +\mu(x)(1-x_1^2) = \sigma(x)\\
\lambda\in\frp,\ \sigma\in\Sigma[x] ,\ \mu\in\Sigma[x],\ \eta\in \rr \,,\\
\deg (\sigma) \leq 6,\ \deg (\mu) \leq 6\,.
\end{array}
\right.
\end{array}
 \]
In practice, one cannot find any feasible solution of degree less than 6, thus we replace the degree constraint by the more restrictive one: $\deg (\sigma) \leq 6,\ \deg (\mu) \leq 6$.  
\end{example}
The computation of $\sha{F}$ requires the approximation of ${\xin}^\dag := \sup\{p(x), x\in\xin\}$.
Since $\xin$ is a basic semi-algebraic set and each template $p$ is a polynomial, then the evaluation of ${\xin}^\dag$ boils down to
 solving a polynomial maximization problem. Next, we use SOS reinforcement described above to over-approximate  ${\xin}^\dag$ with
the set $\becarre{{\xin}}$, defined as follows:
\[
\becarre{{\xin}}(p):=\inf\left\{\eta\left| 
\begin{array}{c}
\eta-p+\langle\nu^{n_\mathrm{in}}, r^{n_\mathrm{in}}\rangle = \sigma_0,\\
\eta\in\rr,\ \sigma_0\in\Sigma[x], \nu^{\mathrm{in}}\in\Sigma[x]^{n_\mathrm{in}},\\
\deg (\sigma_0) \leq 2 m, \deg ( \langle\nu^{n_\mathrm{in}}, r^{n_\mathrm{in}}\rangle ) \leq 2 m 
\end{array}
\right\}\right.
 \enspace.
\]
Thus, the value of $\becarre{{\xin}}(p)$ is obtained by solving an SOS optimization problem. Since $\xin$ is a nonempty compact basic semi-algebraic set, this problem has a feasible solution (see the proof of ~\cite[Th. 4.2]{Las01moments}), ensuring that $\becarre{{\xin}}(p)$ is finite valued.

\begin{example}
The initialization set $\xin$ of Example~\ref{running} is $[-1,1]\times [ -1,1]$. It can be written as:
$\{(x_1,x_2)\in\rr^2\mid x_1^2-1\leq 0,\ x_2^2-1\leq 0\}$. 
Then, considering the same template basis of Example~\ref{examplerunning2} and the template $q_1$:
\[
 \becarre{{\xin}}(q_1):=\inf\left\{\eta\left| 
\begin{array}{c}
\eta-x_1^2+\nu_1^{n_\mathrm{in}}(x)(x_1^2-1)+\nu_2^{n_\mathrm{in}}(x)(x_2^2-1) = \sigma_0(x),\\
\eta\in\rr,\ \sigma_0\in\Sigma[x], \nu_1^{\mathrm{in}},\nu_2^{\mathrm{in}}\in\Sigma[x],\\
\deg (\sigma_0) \leq 6, \deg ( \langle\nu_1^{n_\mathrm{in}}) \leq 6, \deg ( \langle\nu_2^{n_\mathrm{in}}) \leq 6
\end{array}
\right\}\right.
 \enspace.
\]
It is easy to see that taking for all $x\in\rr^2$, $\nu_1^{n_\mathrm{in}}(x)=1$ and for all $x\in\rr^2$, $\nu_2^{n_\mathrm{in}}(x)=0$ leads to
$\eta-x_1^2+\nu_1^{n_\mathrm{in}}(x)(x_1^2-1)+\nu_2^{n_\mathrm{in}}(x)(x_2^2-1) = \eta-1=\sigma_0(x)$. Thus for $\eta=1$ and for all $x\in\rr^2$, 
$\sigma_0(x)=0$, we obtain $\becarre{{\xin}}(q_1)\leq 1$. We will see at Prop.~\ref{safety}, that ${\xin}^\dag\leq_\F \becarre{{\xin}}$. Thus, since ${\xin}^\dag(q_1)=\sup\{x_1^2\mid (x_1,x_2)\in [-1,1]\times [ -1,1]\}=1$, we conclude that $1\leq \becarre{{\xin}}(q_1)$ and $\becarre{{\xin}}(q_1)=1$.
\end{example}

Finally, we define the relaxed functional $\becarre{F}$ for all $w\in\fr$ for all $p\in\pp$ as follows:
\begin{equation}
\label{relaxedfunctional}
  \mybrackets{\becarre{F}(w)}(p) = 
  \displaystyle{\sup\left\{\sup_{i\in\ind}
      \mybrackets{\becarre{F_i}(w)}(p),\becarre{{\xin}}(p)\right\}} \enspace.
\end{equation}
As we followed the construction proposed in Section~\ref{sec:sos}, the relaxed functional $\becarre{F}$ provides a safe over-approximation of the abstract semantics $\sha{F}$.

\begin{proposition}[Safety]
\label{safety}
The following statements hold:
\begin{enumerate}
\item ${\xin}^\dag\leq_\F \becarre{{\xin}}$; 
\item For all $i\in\ind$, for all $w\in\fr$, $\sha{F_i}(w) \leq_\F \becarre{F_i}(w)$;
\item For all $w\in \fr$, $\sha{F}(w) \leq_\F \becarre{F}(w)$.
\end{enumerate}
\end{proposition}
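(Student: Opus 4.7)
The plan is to show that each of the three inequalities is a direct consequence of the standard Positivstellensatz/Lagrangian argument already sketched in Section~\ref{sec:relaxedSOS}: namely, if a polynomial identity holds with suitable SOS multipliers, then pointwise evaluation at any point satisfying the constraints turns the identity into the desired numerical inequality. Item (3) will then follow from (1) and (2) by monotonicity of $\sup$.

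For item~(1), fix $p\in\pp$ and take any feasible tuple $(\eta,\sigma_0,\nu^{\mathrm{in}})$ for the SOS program defining $\becarre{\xin}(p)$, so that the polynomial identity
\[
\eta - p + \langle \nu^{\mathrm{in}}, r^{\mathrm{in}}\rangle = \sigma_0
\]
holds on $\rd$ with $\sigma_0\in\Sigma[x]$ and each component of $\nu^{\mathrm{in}}$ in $\Sigma[x]$. For any $x\in\xin$, each $r_j^{\mathrm{in}}(x)\leq 0$ and each $\nu_j^{\mathrm{in}}(x)\geq 0$, so $\langle \nu^{\mathrm{in}}(x), r^{\mathrm{in}}(x)\rangle \leq 0$; together with $\sigma_0(x)\geq 0$ the identity yields $\eta\geq p(x)$. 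Taking the supremum over $x\in\xin$ gives $\eta\geq {\xin}^\dagger(p)$, and taking the infimum over feasible $\eta$ produces ${\xin}^\dagger(p)\leq \becarre{\xin}(p)$.

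For item~(2), the argument is identical in spirit but accounts for the three families of constraints bundled into the definition of $\becarre{F_i}(w)(p)$. Fix $w\in\fr$, $p\in\pp$, $i\in\ind$, and let $(\lambda,\sigma,\mu,\gamma,\eta)$ be feasible for the program in Eq.~\eqref{relaxedpoly}. For any $x\in w^\star\cap X^i\cap\xlive$, we have $w(q)-q(x)\geq 0$ for every $q\in\pp$ together with $\lambda(q)\geq 0$ (since $\lambda\in\frp$), and both $-r_l^i(x)\geq 0$, $-r_l^0(x)\geq 0$ with $\mu_l(x),\gamma_l(x)\geq 0$ (since $\mu,\gamma$ are tuples of SOS polynomials). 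Evaluating the polynomial identity of Eq.~\eqref{relaxedpoly} at such an $x$ therefore yields
\[
\eta = p(T^i(x)) + \sum_{q\in\pp}\lambda(q)(w(q)-q(x)) - \langle \mu(x),r^i(x)\rangle - \langle \gamma(x),r^0(x)\rangle + \sigma(x) \;\geq\; p(T^i(x)).
\]
Taking the supremum over $x\in w^\star\cap X^i\cap\xlive$ gives $\eta\geq \sha{F_i}(w)(p)$, and taking the infimum over feasible tuples yields $\sha{F_i}(w)(p)\leq \becarre{F_i}(w)(p)$. Finally item~(3) is immediate: from the definitions of $\sha{F}$ and $\becarre{F}$ (Eq.~\eqref{relaxedfunctional}) and monotonicity of the supremum, items (1) and (2) imply $\sha{F}(w)(p)\leq \becarre{F}(w)(p)$ for every $p\in\pp$, i.e.\ $\sha{F}(w)\leq_\F \becarre{F}(w)$. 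No step presents a genuine obstacle here; the proposition is essentially a bookkeeping verification that the SOS reinforcement mechanism introduced in Section~\ref{sec:sos} was applied correctly in the construction of $\becarre{F_i}$ and $\becarre{\xin}$.
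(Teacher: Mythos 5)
Your proof is correct and follows essentially the same route as the paper, which states Proposition~\ref{safety} without a separate proof and justifies it ``by construction'' via the Lagrangian relaxation and SOS reinforcement chain of inequalities derived in Sections~\ref{sec:sos} and~\ref{sec:relaxedSOS}; your pointwise evaluation of a feasible SOS identity on the constrained set is exactly that weak-duality argument made explicit. No gap to report.
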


An important property that we will use to prove some results on policy iteration algorithm is the monotonicity 
of the relaxed functional.
\begin{proposition}[Monotonicity]
\label{monotonicity}
\begin{enumerate}
\item For all $i\in\ind$, $w\mapsto \becarre{F_i}(w)$ is monotone on $\fr$;
\item The function $w\mapsto \becarre{F}(w)$ is monotone on $\fr$.
\end{enumerate}
\end{proposition}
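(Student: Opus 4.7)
The plan is to prove claim (1) by a direct comparison of feasible sets of the SOS programs defining $\becarre{F_i}(w)(p)$ and $\becarre{F_i}(w')(p)$ when $w\leq_\F w'$, and then to derive claim (2) immediately from (1) using properties of pointwise suprema.

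For (1), fix $i\in\ind$, $p\in\pp$ and assume $w\leq_\F w'$. I will show how to convert any feasible tuple for the program at $w'$ into a feasible tuple for the program at $w$ whose objective is no larger. Concretely, pick a feasible $(\lambda',\sigma',\mu',\gamma',\eta')$ for $\becarre{F_i}(w')(p)$, so that
\[
\eta' - p\circ T^i - \sum_{q\in\pp}\lambda'(q)\bigl(w'(q)-q\bigr) + \langle\mu',r^i\rangle + \langle\gamma',r^0\rangle = \sigma' \,.
\]
Define $\lambda:=\lambda'$, $\mu:=\mu'$, $\gamma:=\gamma'$, $\sigma:=\sigma'$ and set
\[
\eta := \eta' - \sum_{q\in\pp}\lambda'(q)\bigl(w'(q)-w(q)\bigr) \,.
\]
A direct substitution gives that $(\lambda,\sigma,\mu,\gamma,\eta)$ satisfies the defining equation of $\becarre{F_i}(w)(p)$ with the same degree bounds, so it is a feasible tuple. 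Because $\lambda'\in\frp$ and $w'(q)\geq w(q)$ for every $q\in\pp$, the correction term $\sum_q\lambda'(q)(w'(q)-w(q))$ is nonnegative, hence $\eta\leq\eta'$. Taking the infimum over all feasible $\eta'$ yields $\becarre{F_i}(w)(p)\leq \becarre{F_i}(w')(p)$, which is what is required.

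Claim (2) follows immediately: since $\becarre{\xin}(p)$ does not depend on $w$ and each $w\mapsto \becarre{F_i}(w)(p)$ is monotone by (1), the map $w\mapsto \sup\bigl\{\sup_{i\in\ind}\becarre{F_i}(w)(p),\becarre{\xin}(p)\bigr\}$ is monotone as a pointwise supremum of monotone functions.

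The main care point is handling the extended-real values in $\br$ allowed for $w$. If $w(q)=+\infty$ then $w'(q)=+\infty$, and for the defining equation of $\becarre{F_i}(w')(p)$ to hold with $\eta'\in\rr$ and $\sigma'\in\Sigma[x]$ one must have $\lambda'(q)=0$ (with the standard convention $0\cdot(+\infty)=0$), so the term $\lambda'(q)(w'(q)-w(q))$ contributes nothing to the correction; the case $w(q)=-\infty$ makes $w^\star$ empty and one can verify that $\becarre{F_i}(w)(p)=-\infty$, which is consistent with monotonicity. Modulo these conventions, the reindexing of Lagrange multipliers is the only substantive step and there is no real obstacle.
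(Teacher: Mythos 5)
Your proof is correct and follows essentially the same argument as the paper: both exploit the identity that the certificate polynomial at $w$ differs from the one at $w'$ by the nonnegative constant $\sum_{q\in\pp}\lambda(q)\bigl(w'(q)-w(q)\bigr)$, and then deduce part (2) as a pointwise supremum of monotone maps. The only cosmetic difference is bookkeeping: the paper keeps $\eta$ fixed and absorbs that constant into the SOS certificate (yielding an inclusion of feasible sets), while you keep $\sigma$ fixed and shift $\eta$ downward; the underlying use of $\lambda\in\frp$ is identical.
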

\begin{proof}
  
  Let $p\in\pp$. For $v\in\fr$, $\lambda\in\frp$, $\mu\in\Sigma[x]^{n_i}$, $\gamma\in\Sigma[x]^{n_0}$ and $\eta\in\rr$ such that $\deg (\langle\mu, r^i\rangle + \langle\gamma, r^0\rangle) \leq 2 m \deg T^i$, we define the polynomial in $x$, $\psi^{\lambda,\mu,\gamma,\eta}(v):=\eta - p\circ T^i-\sum_{\mathrm{q \in
      \pp}}\lambda(q) (v(q) - q) + \langle\mu, r^i\rangle + \langle\gamma,
  r^0\rangle$. We define for $v\in\fr$ the set $R(v)=\{(\lambda,\mu,\gamma,\eta)\mid \psi^{\lambda,\mu,\gamma,\eta}(v)\in \Sigma[x],\ \lambda\in\frp, \mu\in\Sigma[x]^{n_i}, \gamma\in\Sigma[x]^{n_0}, \eta\in\rr\}$.
  Now, let us take $w,w'\in \fr$ such that $w \leq_\F w'$. We have: 
  \[
  \begin{array}{ll}
  &\psi^{\lambda,\mu,\gamma,\eta}(w)\\
  =&\eta - p\circ T^i-\sum_{\mathrm{q \in\pp}}\lambda(q) (w(q) - q) + \langle\mu, r^i\rangle + \langle\gamma,r^0\rangle\\
  =&\eta - p\circ T^i-\sum_{\mathrm{q \in \pp}}\lambda(q) (w(q)-w'(q)+w'(q) - q) + \langle\mu, r^i\rangle + \langle\gamma,r^0\rangle\\
  =&\eta - p\circ T^i-\sum_{\mathrm{q \in \pp}}\lambda(q) (w'(q) - q) + \langle\mu, r^i\rangle + \langle\gamma,r^0\rangle
  -\sum_{\mathrm{q \in \pp}}\lambda(q) (w(q)-w'(q))\\
  =&\psi^{\lambda,\mu,\gamma,\eta}(w') +\sum_{\mathrm{q \in \pp}}\lambda(q) (w'(q) - w(q)) \,.
  \end{array}
  \]
   Then, from $w \leq_\F w'$ and the fact that $\lambda(q)$ are nonnegative scalars, if $\psi^{\lambda,\mu,\gamma,\eta}(w')$ is an SOS polynomial, so is $\psi^{\lambda,\mu,\gamma,\eta}(w)$ as a sum of a SOS polynomial and a nonnegative scalar. Hence, we have
   $R(w')\subseteq R(w)$. Finally, we recall that if $A \subseteq B$, then $\inf_B \leq \inf_A$. We conclude that $\mybrackets{\becarre{F_i}(w)}(p) \leq \mybrackets{\becarre{F_i}(w')}(p)$.
   
  2. The mapping $\becarre{F}$ is monotone as supremum of monotone maps.
\end{proof}

From the third assertion of Prop.~\ref{safety}, if $w$ satisfies $\rel{F}(w)\leq_\F w$ then 
$\sha{F}(w)\leq_\F w$ and from Prop.~\ref{boundineq}, $w^\star$ is an inductive invariant 
and thus $\rea\subseteq w^\star$. This result is formulated as the following corollary.
\begin{corollary}[Over-approximation]
For all $w\in\fr$ such that $\becarre{F}(w)\leq_\F w$ then $\rea\subseteq w^\star$.
\end{corollary}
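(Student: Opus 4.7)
The plan is to chain together three results already established in the paper, so the proof should be essentially a one-line deduction spelled out in a short paragraph. I do not foresee any real obstacle here; the work has been done upstream and the corollary is a bookkeeping consequence.

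First I would fix $w \in \fr$ satisfying the hypothesis $\becarre{F}(w) \leq_\F w$. Invoking the third assertion of Proposition~\ref{safety}, which gives $\sha{F}(w) \leq_\F \becarre{F}(w)$, and combining it with the hypothesis via transitivity of $\leq_\F$, I obtain $\sha{F}(w) \leq_\F w$. At this stage the proof has left the SOS-relaxed world and returned to the exact abstract transfer function.

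Next I would apply Proposition~\ref{boundineq}, which is the exact translation between the inductive-invariant property of $w^\star$ and the inequality $\sha{F}(w) \leq_\F w$. This yields $F(w^\star) \subseteq w^\star$, i.e., $w^\star$ is an inductive invariant of the concrete collecting semantics.

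Finally I would conclude with the Tarski-style characterization of $\rea$ recorded in Equation~\eqref{eq:fixpoint}, namely $\rea = \min\{C \in \wp(\rd) \mid F(C) \subseteq C\}$. Since $w^\star$ belongs to the set on the right-hand side, the minimality gives $\rea \subseteq w^\star$, which is the statement. The only mildly delicate point to mention explicitly is the transitivity step where the SOS-relaxed inequality is combined with the safety over-approximation to fall back onto the abstract operator $\sha{F}$; everything else is a direct appeal to previously proved facts.
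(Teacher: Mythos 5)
Your proposal is correct and follows exactly the argument the paper itself gives just before stating the corollary: Proposition~\ref{safety} (third assertion) plus the hypothesis yields $\sha{F}(w)\leq_\F w$, Proposition~\ref{boundineq} turns this into inductiveness of $w^\star$, and the fixpoint characterization of $\rea$ in Eq.~\eqref{eq:fixpoint} gives the inclusion. Nothing is missing.
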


\section{Policy Iteration in Polynomial Templates Abstract Domains}
\label{sec:policySOS}

We are interested in computing the least fixpoint $\becarre{\rea}$ of
$\becarre{F}$, $\becarre{\rea}$ being an over-approximation of $\rea$ (least fixpoint of
  $F$). As for the definition of $\rea$, it can be reformulated
  using Tarski's theorem as the minimal post-fixpoint:
\[
\becarre{\rea} = \min \{ w \in \fr | \becarre{F}(w) \leq_\F w \} \,.
\]
The idea behind policy iteration is to over-approximate $\becarre{\rea}$ using successive iterations which are composed of
\begin{itemize}
\item the computation of polynomial template bounds using linear programming,
\item the determination of new policies using SOS programming,
\end{itemize}
until a fixpoint is reached. 
Policy iteration navigates in the set of post-fixpoints of $\becarre{F}$ and needs to start from a post-fixpoint $w^0$
known {\em a-priori}. It acts like a narrowing operator and can be interrupted at any time. For further information on policy iteration,
the interested reader can consult~\cite{costan2005policy,DBLP:conf/esop/GaubertGTZ07}.

\subsection{Policies}
Policy iteration can be used to compute a fixpoint of a monotone self-map defined as an infimum of a family of 
affine monotone self-maps. In this paper, we propose to design a policy iteration algorithm to compute a fixpoint of $\becarre{F}$.
In this subsection, we give the formal definition of policies in the context of polynomial templates and define the family
of affine monotone self-maps. We do not apply the concept of policies on $\becarre{F}$ but on the functions $\becarre{F_i}$
exploiting the fact that for all $i\in\ind$, $\becarre{F_i}$ is the optimal value of a minimization problem.

Policy iteration needs a \emph{selection property}, that is, when an element $w\in\fr$ is given, there exists a policy which achieves the 
infimum. In our context, since we apply the concept of policies to $\becarre{F_i}$, it means that the minimization 
problem involved in the computation of $\becarre{F_i}$ has an optimal solution. In our case, for $w\in\fr$ and $p\in\pp$, an optimal solution is a vector $(\lambda,\sigma,\mu, \gamma)\in \frp\times\Sigma[x]\times
\Sigma[x]^{n_i} \times \Sigma[x]^{n_0}$ such that, using
  (\ref{relaxedpoly}), we obtain:
\begin{equation}
\label{optimalsolution}
\begin{array}{c}
\displaystyle{
\mybrackets{\becarre{F_i}(w)}(p)=p\circ T^i+\sum_{q \in \pp}\lambda(q) (w(q) - q) -\langle\mu, r^i\rangle -\langle\gamma, r^0\rangle +\sigma}\\
\text{and }\deg (\sigma) \leq 2 m \deg T^i ,\ \deg (\langle\mu, r^i\rangle + \langle\gamma, r^0\rangle) \leq 2 m \deg T^i
\end{array}
\enspace .
\end{equation}
Observe that in Eq.~\eqref{optimalsolution}, $\mybrackets{\becarre{F_i}(w)}(p)$ is a scalar whereas
the right-hand-side is a polynomial. The equality in this equation means that this polynomial is a constant polynomial. 
Then we introduce the set of feasible solutions for the SOS problem $\mybrackets{\becarre{F_i}(w)}(p)$:
\begin{equation}
\label{soldef}
\Sol(w,i,p)=\{(\lambda,\sigma, \mu, \gamma)\in \frp\times \Sigma[x]\times \Sigma[x]^{n_i} \times \Sigma[x]^{n_0}\mid \text{Eq.~\eqref{optimalsolution} holds}\}\enspace.
\end{equation}
Since policy iteration algorithm can be stopped at any step and still provides a sound over-approximation, we stop the iteration when $\Sol(w,i,p)=\emptyset$.
Now, we are interested in the elements $w\in\frr$ such that $\Sol(w,i,p)$ is non-empty:
\begin{equation}
\label{slaterspace}
\FS=\{w\in \fr\mid \forall\, i\in\ind,\ \forall\, p\in\pp,\ \Sol(w,i,p)\neq \emptyset\}\enspace.
\end{equation}
The notation $\FS$ was introduced in~\cite{DBLP:journals/corr/abs-1111-5223} to define the elements $w\in\fr$ satisfying 
 $\Sol(w,i,p)\neq \emptyset$. In~\cite[Section 4.3]{DBLP:journals/corr/abs-1111-5223}, we
 could ensure that $\Sol(w,i,p)\neq \emptyset$ using Slater's 
 constraint qualification condition. 
 In the current nonlinear setting, we cannot use the same condition, which yields a more complicated definition for $\FS$.
 
 Finally, we can define a policy as a map which selects, for all $w\in\FS$, for all $i\in\ind$ and for all $p\in\pp$ a vector of $\Sol(w,i,p)$.
More formally, we have the following definition:
\begin{definition}[Policies in the policy iteration SOS based setting]
A policy is a map $\pi:\FS\mapsto ((\ind\times \pp)\mapsto \frp\times\Sigma[x]\times\Sigma[x]^{n_i}\times\Sigma[x]^{n_0})$ such that:
 $\forall\, w\in\FS$, $\forall\, i\in\ind$, $\forall\, p\in\pp$, $\pi(w)(i,p)\in \Sol(w,i,p)$.   
\end{definition}
We denote by $\Pi$ the set of policies. For $\pi\in\Pi$, let us define $\pi_\lambda$ as the map from $\FS$ to $(\ind\times \pp)\mapsto \frp$ which associates with $w\in\FS$ and $(i,p)\in\ind\times\pp$ the first element of $\pi(w)(i,p)$ i.e. if $\pi(w)(i,p)=(\lambda,\sigma,\mu,\gamma)$ then $\pi_\lambda(w)(i,p)=\lambda$. The equality  $\pi_\lambda(w)(i,p)=\lambda$ means that when we perform the policy iterations algorithm, we select the vector of Lagrange multipliers $\lambda$ associated with the constraints of the form $q(x)\leq w(q)$. The purpose of this selection is to update the value of $w$ using the direction $\lambda$. The other coordinates composing $\pi(w)(i,p)$ that is $\sigma,\mu,\gamma$ do not serve the policy iterations algorithm but are only used to take in consideration the sets $X^i$ and $\xlive$ in the computation of $\becarre{F_i}(w)(p)$.

As said before, policy iteration exploits the linearity of maps when a policy is fixed. We have to define the affine maps we will use in a policy iteration step. With $\pi\in\Pi$, $w\in\FS$, $i\in\ind$ and $p\in\pp$ and $\lambda=\pi_\lambda(w)(i,p)$, let us define the map $\phi_{w,i,p}^{\lambda}:\fr\mapsto\br$ as follows:
\begin{equation}
\label{auxiliary}
v\mapsto \phi_{w,i,p}^{\lambda}(v)=\sum_{q \in \pp}\lambda(q) v(q)+\mybrackets{\becarre{F_i}(w)}(p)-\sum_{q \in \pp}\lambda(q) w(q) \,.
\end{equation}
Then, for $\pi\in\Pi$, we define for all $w\in\FS$, the map $\Phi_w^{\pi(w)}$ from $\fr\mapsto \fr$. Let $v\in\fr$ and $p\in\pp$:
\begin{equation}
\label{auxiliaryfixpoint}
\Phi_w^{\pi(w)}(v)(p)=\sup\left\{\sup_{i\in\ind}\phi_{w,i,p}^{\lambda}(v),\becarre{{\xin}}(p)\right\} \,.
\end{equation}

\begin{example}
Let us consider Example~\ref{examplerunning2} and the function $w^0(q_1)=w^0(q_2)=2.1391$ and $w^0(p)=0$. 
Then there exists two SOS polynomials $\overline{\mu}$ and $\overline{\sigma}$ such that, for all $x\in \rd$:
\[ 
\begin{array}{rr}
\mybrackets{\becarre{F_1}(w)}(q_1) =&(0.687x_1+0.558x_2-0.0001x_1x_2)^2+\overline{\lambda}(q_1) (2.1391 - x_1^2)\\
&+\overline{\lambda}(q_2) (2.1391 - x_2^2)-\overline{\lambda}(p)p(x) -\overline{\mu}(x)(1-x_1^2) +\overline{\sigma}(x)\\
=&1.5503 \,,
\end{array}
 \]
 with $\overline{\lambda}(q_1)=\overline{\lambda}(q_2)=0$ and $\overline{\lambda}(p)=2.0331$. It means that $\overline{\lambda}$, $\overline{\mu}$ and 
 $\overline{\sigma}$ are computed such that $(0.687x_1+0.558x_2-0.0001x_1x_2)^2+\overline{\lambda}(q_1) (2.1391 - x_1^2)+\overline{\lambda}(q_2) (2.1391 - x_2^2)-\overline{\lambda}(p)p(x) -\overline{\mu}(x)(1-x_1^2) +\overline{\sigma}(x)$ is actually 
a constant polynomial.

 Then $(\overline{\lambda},\overline{\mu},\overline{\sigma})\in\Sol(w^0,1,q_1)$
 and we can define a policy $\pi(w^0)$ such that $\pi(w^0)(1,q_1)=(\overline{\lambda},\overline{\mu},\overline{\sigma})$ and thus 
 $\pi_\lambda(w^0)(1,q_1)=(0,0,2.0331)$. We can thus define for $v\in\frrunr$, the affine mapping: $\phi_{w^0,1,q_1}^{\lambda}(v)=\lambda(q_1) v(q_1)+\lambda(q_2)v(q_2)+\lambda(p)v(p)+\mybrackets{\becarre{F_1}(w)}(q_1)-\lambda(q_1) w(q_1)-\lambda(q_2)w(q_2)-\lambda(p)w(p)=2.1391 v(p)+1.5503$.
\end{example}

Let us denote by $\frr$ the set of finite valued function on $\pp$ i.e $g\in\frr$ iff $g(p)\in\rr$ for all $p\in\pp$.
\begin{proposition}[Properties of $\phi_{i,w,p}^{\lambda}$]
\label{phiauxproperty}
Let $\pi\in\Pi$, $w\in\FS$ and $(i,p)\in\ind\times \pp$. Let us write $\lambda=\pi_\lambda(w)(i,p)$. The following properties are true:
\begin{enumerate}
\item $\phi_{w,i,p}^{\lambda}$ is affine on $\frr$\,;
\item $\phi_{w,i,p}^{\lambda}$ is monotone on $\fr$\,;
\item $\forall\, v\in \fr$, $\becarre{F_i}(v)(p)\leq \phi_{w,i,p}^{\lambda}(v)$\,;
\item $\phi_{w,i,p}^{\lambda}(w)=\becarre{F_i}(w)(p)$\,.
\end{enumerate}
\end{proposition}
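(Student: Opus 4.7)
My plan is to treat the four items in order, with items 1, 2, and 4 being essentially bookkeeping and item 3 being the only one with real content.

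For item 1, I would simply inspect the formula (\ref{auxiliary}): once $w$, $i$, $p$, and $\lambda$ are fixed, the quantity $\becarre{F_i}(w)(p) - \sum_{q \in \pp} \lambda(q) w(q)$ is a constant scalar, and $v \mapsto \sum_{q \in \pp} \lambda(q) v(q)$ is by construction a linear form on $\frr$. Affineness follows. For item 2, the observation is that $\lambda = \pi_\lambda(w)(i,p) \in \frp$, so each coefficient $\lambda(q)$ is a nonnegative scalar; hence $v \leq_\F v'$ implies $\sum_q \lambda(q) v(q) \leq \sum_q \lambda(q) v'(q)$, and the additive constant is unaffected. Item 4 is the direct algebraic cancellation $\phi_{w,i,p}^\lambda(w) = \sum_q \lambda(q) w(q) + \becarre{F_i}(w)(p) - \sum_q \lambda(q) w(q) = \becarre{F_i}(w)(p)$.

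The substantive step is item 3. The strategy is to exhibit $(\lambda, \sigma, \mu, \gamma, \eta)$ with $\eta = \phi_{w,i,p}^\lambda(v)$ as a feasible point of the SOS program (\ref{relaxedpoly}) defining $\becarre{F_i}(v)(p)$, which immediately gives the desired inequality by optimality. Since $\pi(w)(i,p) = (\lambda, \sigma, \mu, \gamma) \in \Sol(w,i,p)$, equation (\ref{optimalsolution}) provides the polynomial identity
\[
\becarre{F_i}(w)(p) = p \circ T^i + \sum_{q \in \pp} \lambda(q)(w(q) - q) - \langle \mu, r^i \rangle - \langle \gamma, r^0 \rangle + \sigma,
\]
together with the required degree bounds and the SOS/nonnegativity constraints on $\sigma, \mu, \gamma, \lambda$. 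The plan is to reuse the very same $\lambda, \mu, \gamma, \sigma$ for the problem at $v$: substituting $\eta := \phi_{w,i,p}^\lambda(v)$ into the defining equation of (\ref{relaxedpoly}) written for $v$ in place of $w$ yields
\[
\eta - p \circ T^i - \sum_{q \in \pp} \lambda(q)(v(q) - q) + \langle \mu, r^i \rangle + \langle \gamma, r^0 \rangle
= \becarre{F_i}(w)(p) - \sum_{q \in \pp} \lambda(q)(w(q) - q) - p \circ T^i + \langle \mu, r^i \rangle + \langle \gamma, r^0 \rangle,
\]
where the $\sum_q \lambda(q) v(q)$ from $\eta$ cancels the same term coming from the $v$-constraints. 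The right-hand side equals $\sigma$ by (\ref{optimalsolution}), so the candidate tuple satisfies the polynomial equality constraint of (\ref{relaxedpoly}). All other feasibility conditions (SOS membership, nonnegativity of $\lambda$, degree bounds) are inherited from the fact that $(\lambda, \sigma, \mu, \gamma) \in \Sol(w,i,p)$. Hence $\eta$ is a feasible value for the minimization problem defining $\becarre{F_i}(v)(p)$, yielding $\becarre{F_i}(v)(p) \leq \eta = \phi_{w,i,p}^\lambda(v)$.

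The only delicate point is making sure the cancellation is performed on the correct side and that no hidden dependence on $v$ slips into $\sigma, \mu, \gamma$; the reason it works is precisely that these multipliers depend on $x$ but not on the bound function, so they remain valid SOS certificates when the template bounds are changed from $w$ to $v$. Everything else is a matter of writing the algebra cleanly.
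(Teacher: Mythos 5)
Your proposal is correct and follows essentially the same route as the paper's proof: items 1, 2 and 4 by direct inspection of Eq.~\eqref{auxiliary} and nonnegativity of $\lambda$, and item 3 by reusing the certificate $(\lambda,\sigma,\mu,\gamma)\in\Sol(w,i,p)$ as a feasible point of the SOS program~\eqref{relaxedpoly} at $v$ with objective value $\eta=\phi_{w,i,p}^{\lambda}(v)$, so that optimality gives $\becarre{F_i}(v)(p)\leq\phi_{w,i,p}^{\lambda}(v)$. The algebraic cancellation and the observation that the multipliers depend on $x$ but not on the bound function match the paper's argument exactly.
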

\begin{proof}
Let $w\in \FS$, $i\in\ind$, $p\in\pp$ and $\pi\in\Pi$. 

1. The fact that $\phi_{w,i,p}^{\pi_\lambda(w)(i,p)}$ is affine follows readily from the definition (Eq.~\eqref{auxiliary}).

2. The monotonicity of $\phi_{w,i,p}^{\pi_\lambda(w)(i,p)}$ follows from the nonnegativity of $\pi_\lambda(w)(i,p)$.

3. Let $v\in\fr$. Since $w\in\FS$, there exists $(\lambda, \sigma,\mu, \gamma)\in \frp\times\Sigma[x]\times
\Sigma[x]^{n_i} \times \Sigma[x]^{n_0}$ such that $\deg (\sigma) \leq 2 m \deg T^i ,\ \deg (\langle\mu, r^i\rangle + \langle\gamma, r^0\rangle) \leq 2 m \deg T^i$: 
\[
\mybrackets{\becarre{F_i}(w)}(p)=p\circ T^i+\sum_{q \in \pp}\lambda(q) (w(q) - q) -\langle\mu, r^i\rangle -\langle\gamma, r^0\rangle +\sigma \,.
\]
Writing $\lambda=\pi_\lambda(w)(i,p)$, we get:
\[
\begin{array}{ll}
\displaystyle{\phi_{w,i,p}^{\lambda}(v)}&=\displaystyle{\sum_{q \in \pp}\lambda(q) v(q)-\sum_{q \in \pp}\lambda(q) w(q)+p\circ T^i+\sum_{q \in \pp}\lambda(q) (w(q) - q)}\\ 
&\displaystyle{-\langle\mu, r^i\rangle -\langle\gamma, r^0\rangle +\sigma}\\
&=\displaystyle{p\circ T^i+\sum_{q \in \pp}\lambda(q) (v(q) - q) -\langle\mu, r^i\rangle -\langle\gamma, r^0\rangle +\sigma} \,.
\end{array}
\]
Finally, 
\begin{equation}
\label{auxeqproof}
\phi_{w,i,p}^{\lambda}(v)-p\circ T^i-\sum_{q \in \pp}\lambda(q) (v(q) - q) +\langle\mu, r^i\rangle +\langle\gamma, r^0\rangle=\sigma \,,
\end{equation}
and recall that (Eq.~\eqref{optimalsolution}) 
\[
\begin{array}{ll}
 \mybrackets{\becarre{F_i}(w)}(p) =  &\displaystyle{\inf_{\lambda,\sigma, \mu,\gamma,\eta}} \eta\\
&\st
\left\{
\begin{array}{l}
\displaystyle{\eta- p\circ T^i-\sum_{\mathrm{q \in \pp}}\lambda(q) (w(q) - q) + \langle\mu, r^i\rangle + \langle\gamma, r^0\rangle} = \sigma\\
\lambda\in\frp,\ \sigma\in\Sigma[x] ,\ \mu\in\Sigma[x]^{n_i},\ \gamma\in\Sigma[x]^{n_0},\ \eta\in \rr \,,\\
\deg (\sigma) \leq 2 m \deg T^i ,\ \deg (\langle\mu, r^i\rangle + \langle\gamma, r^0\rangle) \leq 2 m \deg T^i \,.
\end{array}
\right.
\end{array}
\]
From Eq.~\eqref{auxeqproof}, $(\lambda,\sigma,\mu,\gamma,\phi_{w,i,p}^{\lambda}(v))$ is a feasible solution 
of the latter minimization problem and we conclude that $\mybrackets{\becarre{F_i}(v)}(p)\leq \phi_{w,i,p}^{\pi_\lambda(w)(i,p)}(v)$.

4.
\[
\phi_{w,i,p}^{\pi_\lambda(w)(i,p)}(w)=\sum_{q \in \pp}\lambda(q) w(q)+\mybrackets{\becarre{F_i}(w)}(p)-\sum_{q \in \pp}\lambda(q) w(q)
=\mybrackets{\becarre{F_i}(w)}(p)\enspace.
\]
\end{proof}
The properties presented in Prop.~\ref{phiauxproperty} imply some useful properties for the maps $\Phi_w^{\pi(w)}$.
\begin{proposition}[Properties of $\Phi_w^{\pi(w)}$]
\label{phiproperty}
Let $\pi\in\Pi$ and $w\in\FS$. The following properties are true:
\begin{enumerate}
\item $\Phi_w^{\pi(w)}$ is monotone on $\fr$\,;
\item $\becarre{F}\leq_\F \Phi_{w}^{\pi(w)}$\,;
\item  $\Phi_w^{\pi(w)}(w)=\becarre{F}(w)$ \,;
\item Suppose that the least fixpoint of $\Phi_w^{\pi(w)}$ is $L\in\frr$. Then $L$ can be computed as the unique optimal solution of the linear program:
    \begin{equation}
    \label{linearprogram}
    \inf\left\{\sum_{p'\in \pp} v(p')\mid \forall\, (i,p)\in \ind\times\pp,\ \phi_{i,w,p}^{\pi_\lambda(w)(i,p)}(v) \leq v(p),\ \forall q \in \pp, \becarre{{\xin}} (q) \leq v(q)\right\}\,.
    \end{equation}     
\end{enumerate}
\end{proposition}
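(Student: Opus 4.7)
The plan is to derive items 1--3 as routine pointwise consequences of Prop.~\ref{phiauxproperty}, and then treat item 4 as the substantive claim by combining Tarski's theorem with the affine structure of $\Phi_w^{\pi(w)}$.

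For item 1, I would fix $v \leq_\F v'$ in $\fr$ and observe that, for every $p \in \pp$, each map $v \mapsto \phi_{w,i,p}^{\lambda}(v)$ is monotone by the second assertion of Prop.~\ref{phiauxproperty}, so $\phi_{w,i,p}^{\lambda}(v) \leq \phi_{w,i,p}^{\lambda}(v')$. Taking the supremum over $i\in\ind$ and with $\becarre{\xin}(p)$ preserves the inequality, giving $\Phi_w^{\pi(w)}(v)(p) \leq \Phi_w^{\pi(w)}(v')(p)$. For item 2, I would use the third assertion of Prop.~\ref{phiauxproperty} componentwise: for every $v \in \fr$, $i \in \ind$ and $p \in \pp$, $\becarre{F_i}(v)(p) \leq \phi_{w,i,p}^{\lambda}(v)$; then take suprema over $i$ together with $\becarre{\xin}(p)$ to recover $\becarre{F}(v)(p) \leq \Phi_w^{\pi(w)}(v)(p)$. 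For item 3, specializing to $v=w$ and invoking the fourth assertion of Prop.~\ref{phiauxproperty} gives $\phi_{w,i,p}^{\lambda}(w) = \becarre{F_i}(w)(p)$, so the supremum defining $\Phi_w^{\pi(w)}(w)(p)$ coincides with the one defining $\becarre{F}(w)(p)$.

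Item 4 is the main point. I would first argue that the feasible set of~\eqref{linearprogram} is exactly the set of post-fixpoints of $\Phi_w^{\pi(w)}$ in $\frr$: the constraints $\phi_{w,i,p}^{\lambda}(v) \leq v(p)$ for all $(i,p)$ together with $\becarre{\xin}(q) \leq v(q)$ for all $q$ are precisely the componentwise inequality $\Phi_w^{\pi(w)}(v) \leq_\F v$, by the definition~\eqref{auxiliaryfixpoint}. Since $\phi_{w,i,p}^{\lambda}$ is affine in $v$ by the first assertion of Prop.~\ref{phiauxproperty} and $\becarre{\xin}(p)$ is a constant, these constraints are linear; together with the linear objective $\sum_{p'\in\pp} v(p')$ this is indeed a linear program over $\frr$.

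By item 1, $\Phi_w^{\pi(w)}$ is a monotone self-map on $\fr$, and by hypothesis its least fixpoint $L$ lies in $\frr$. By Tarski's theorem, $L$ is the infimum of the post-fixpoints, so $L \leq_\F v$ for every feasible $v$. In particular $L$ is itself feasible (it is even a fixpoint), hence it achieves $\sum_{p'\in\pp} L(p') \leq \sum_{p'\in\pp} v(p')$ for every feasible $v$, so $L$ is an optimal solution of~\eqref{linearprogram}. Uniqueness follows from the componentwise minimality: any other feasible $v' \neq L$ satisfies $L(p') \leq v'(p')$ for all $p'$ with strict inequality at some coordinate, which forces $\sum_{p'\in\pp} L(p') < \sum_{p'\in\pp} v'(p')$. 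The only subtle point I anticipate is the tacit assumption that $\pp$ is finite, so the sum in the objective is well-defined and the LP has finitely many variables and constraints; this should be the standing assumption in the template framework, and I would make it explicit if needed.
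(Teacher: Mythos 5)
Your proof is correct and follows the paper's structure for items 1--3 essentially verbatim: monotonicity from the second assertion of Prop.~\ref{phiauxproperty} plus closure of monotonicity under pointwise suprema, the bound $\becarre{F}\leq_\F \Phi_w^{\pi(w)}$ from the third assertion, and item 3 from the fourth. For item 4 both you and the paper start the same way, namely by identifying the feasible set of~\eqref{linearprogram} with the post-fixpoints of $\Phi_w^{\pi(w)}$ and invoking Tarski's characterization $L=\inf\{v\mid \Phi_w^{\pi(w)}(v)\leq_\F v\}$, but you finish differently: you use directly that $L$ is a componentwise lower bound of every feasible point and is itself feasible, which gives optimality immediately and, moreover, yields the \emph{uniqueness} claim (any feasible $v'\neq L$ dominates $L$ componentwise with a strict gap somewhere, hence has strictly larger objective). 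The paper instead argues by contradiction, taking a hypothetical feasible $\bar v$ with smaller objective and showing that the pointwise meet $\inf\{\bar v,L\}$ is a post-fixpoint strictly below $L$, contradicting minimality; it never addresses uniqueness explicitly, so your route is slightly more direct and actually proves a bit more of what the statement asserts. Your side remark that $\pp$ must be finite for the objective and the LP to make sense is accurate and consistent with the paper's standing assumption of a finite template basis.
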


LP problem~\eqref{linearprogram} corresponds exactly to the linear program presented in the case of quadratic templates~\cite[Eq. 4.4]{DBLP:journals/corr/abs-1111-5223}.
\begin{proof}
Let $\pi\in\Pi$ and $w\in\FS$.

1. The map $\Phi_w^{\pi(w)}$ is monotone as the map $\phi_{w,i,p}^{\pi_\lambda(w)(i,p)}$ is monotone for all $i\in\ind$ and 
for all $p\in\pp$, and the the fact that the point-wise supremum of monotone maps is also monotone. 

2.  Let $v\in\fr$ and let $p\in\pp$. Recall that:
\[
\mybrackets{\becarre{F}(v)}(p) = 
  \displaystyle{\sup\left\{\sup_{i\in\ind}
      \mybrackets{\becarre{F_i}(v)}(p),\becarre{{\xin}}(p)\right\}} \,,
\]
and from the third assertion of Prop.~\ref{phiauxproperty}, we have for all $i\in\ind$, 
$\becarre{F_i}(v)(p)\leq \phi_{w,i,p}^{\pi_\lambda(w)(i,p)}$, by taking the supremum over $\ind$ and then the supremum with
$\becarre{{\xin}}(p)$, we obtain that $\becarre{F}(v)(p)\leq \Phi_w^{\pi(w)}(v)(p)$, yielding the desired result.

3. This result follows readily from the fourth assertion of Prop.~\ref{phiauxproperty} and the definition of $\Phi_w^{\pi(w)}$ (Eq.~\eqref{auxiliaryfixpoint}). 

4. By Tarski's theorem and as $\Phi_w^{\pi(w)}$ is monotone, $\Phi_w^{\pi(w)}$ has a least fixpoint in $\fr$. Let 
$L$ be this least fixpoint supposed to be finite valued. Now, from Tarski's theorem and the definition of $\Phi_w^{\pi(w)}$, we have:
\[
\begin{array}{ll}
L&=\inf\{v\mid \Phi_w^{\pi(w)}(v)\leq_\F v\}\\
 &=\inf\left\{v\mid \forall\, (i,p)\in \ind\times\pp,\ \phi_{i,w,p}^{\pi_\lambda(w)(i,p)}(v) \leq v(p),\ \forall q \in \pp, \becarre{{\xin}} (q) \leq v(q)\right\}\enspace .
 \end{array}
\] 
Let us suppose that there exists a feasible solution $\bar{v}$ such that $\sum_{q\in\pp} \bar{v}(q)< \sum_{q\in\pp} L(q)$. Note that since $\becarre{{\xin}} \leq_\F \bar{v}$, $\sum_{q\in\pp} \bar{v}$ is finite. Then we have $\inf\{\bar{v},L\}\leq L$ and $\inf\{\bar{v},L\}\neq L$. As $\Phi_w^{\pi(w)}$ is monotone and as $\bar{v}$ and $L$ are feasible, we have $\Phi_w^{\pi(w)}(\inf\{\bar{v},L\})\leq \inf\{\bar{v},L\}$. This contradicts the minimality of $L$. We conclude that $L$ is the optimal solution of Linear Program~\eqref{linearprogram}. 
\end{proof}

\begin{remark}
We recall that the linear constraints in Problem~\eqref{linearprogram} come from the use of the function defined at Equation~\eqref{auxiliary} which is affine on the variable $v$. The linear forms are defined from the vector of Lagrange multipliers $\lambda$ 
found when we solve the minimization problem involved in Equation~\eqref{relaxedpoly}. If we had allowed a vector of SOS polynomials $\lambda$ as vector of Lagrange multipliers, we would obtain a set of polynomial inequalities that we would solve using SOS programming. The resulted problem would not have a feasible solution. 

For example, let us consider an SOS polynomial template $p$, an SOS (non scalar) polynomial $\lambda$ and a scalar $c$. 
Then, in this case, an analog of Problem~\eqref{relaxedpoly} would be: 
\[
\min\{v(p)\in \rr\mid \lambda(x) v(p)+c\leq v(p), \forall\, x\in \rr,\ v(p)\geq {\xin}^\dagger(p)\}
\]
We assumed that $p$ is a SOS polynomial template, implying that ${\xin}^\dagger(p)$ is strictly positive. Since $\lambda(x)$ is a non scalar SOS polynomial and $v(p)> 0$, then $v(p)(1-\lambda(x))-c$ is negative for some sufficiently large $x$. This proves the infeasibility of the problem.
\end{remark}

Recall that a function $g:\rd\mapsto\rr$ is upper-semicontinuous at $x$ iff for all $(x_n)_{n\in\nn}$ converging to $x$,
then $\limsup_{n\to+\infty} g(x_n)\leq g(x)$.
\begin{proposition}
Let $p\in\pp$. Then $w\mapsto \becarre{F}(w)(p)$ is upper-semicontinuous on $\FS\cap\frr$.
\end{proposition}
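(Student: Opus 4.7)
The plan is to reduce the statement to the individual components $\becarre{F_i}$ and exploit the envelope-type identity provided by Prop.~\ref{phiauxproperty}. By definition,
\[
\becarre{F}(w)(p) = \sup\Bigl\{\sup_{i\in\ind}\becarre{F_i}(w)(p),\ \becarre{{\xin}}(p)\Bigr\}.
\]
Since the index set $\ind$ is finite (the partition $\state$ is a finite family) and $\becarre{{\xin}}(p)$ does not depend on $w$, a finite supremum of upper-semicontinuous functions being upper-semicontinuous, it suffices to prove that, for every fixed $i \in \ind$, the map $w \mapsto \becarre{F_i}(w)(p)$ is upper-semicontinuous on $\FS \cap \frr$. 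The topology on $\frr$ is the pointwise (product) topology; since $\pp$ is finite, this identifies $\frr$ with a finite-dimensional Euclidean space.

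Fix $i \in \ind$, $p \in \pp$, and $w \in \FS \cap \frr$, and let $(w_n)_{n \in \nn} \subseteq \FS \cap \frr$ converge pointwise to $w$. Because $w \in \FS$, the set $\Sol(w,i,p)$ is non-empty, so we can select an element $(\lambda,\sigma,\mu,\gamma) \in \Sol(w,i,p)$ and consider the associated affine map $\phi_{w,i,p}^{\lambda}$ from Eq.~\eqref{auxiliary}. By Prop.~\ref{phiauxproperty}, items 3 and 4, this map provides an affine majorant of $\becarre{F_i}(\cdot)(p)$ that is tight at $w$:
\[
\becarre{F_i}(v)(p) \leq \phi_{w,i,p}^{\lambda}(v) \quad \text{for all } v \in \fr, \qquad \phi_{w,i,p}^{\lambda}(w) = \becarre{F_i}(w)(p).
\]

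Applying the inequality at $v = w_n$ and passing to the upper limit yields
\[
\limsup_{n \to +\infty} \becarre{F_i}(w_n)(p) \leq \limsup_{n \to +\infty} \phi_{w,i,p}^{\lambda}(w_n) = \phi_{w,i,p}^{\lambda}(w) = \becarre{F_i}(w)(p),
\]
where the middle equality uses that $\phi_{w,i,p}^{\lambda}$ is affine (hence continuous) on the finite-dimensional space $\frr$. This establishes upper-semicontinuity of $\becarre{F_i}(\cdot)(p)$ at $w$, and thus of $\becarre{F}(\cdot)(p)$.

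The only real subtlety is ensuring that an affine majorant tight at $w$ actually exists, which is precisely the content of the assumption $w \in \FS$: this is where the selection property built into the definition of $\FS$ is used, and it is the reason the result is restricted to this subdomain rather than to all of $\frr$. I do not expect any analytic difficulty beyond that, since once the envelope inequality is in hand, the upper-semicontinuity is the classical observation that a pointwise infimum of continuous functions is upper-semicontinuous.
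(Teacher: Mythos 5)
Your proposal is correct and follows essentially the same route as the paper's proof: an affine majorant of the relaxed functional, tight at $w$ thanks to $w\in\FS$ (the selection property), combined with continuity of affine maps and a passage to the $\limsup$. The only cosmetic difference is that you apply the envelope argument componentwise to each $\becarre{F_i}(\cdot)(p)$ via Prop.~\ref{phiauxproperty} and then take a finite supremum of upper-semicontinuous maps, whereas the paper works directly with the aggregated map $\Phi_w^{\pi(w)}(\cdot)(p)$ and Prop.~\ref{phiproperty}.
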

\begin{proof}
Let $\pi\in\Pi$, $w\in\FS\cap\frr$ and $p\in\pp$. Let $i\in\ind$. Let $(w_n)_{n\in\nn}$ be a sequence of elements of $\frr$ converging to $w$. Let $\lambda=\pi_\lambda(w)(i,p)$. Since $\phi_{i,w,p}^{\lambda}$ is affine on $\frr$, then $ \phi_{i,w,p}^{\lambda}$ is continuous on $\frr$ and finally $v\mapsto \Phi_w^{\pi(w)}(v)(p)$ is continuous on $\frr$ as a finite supremum of continuous functions on $\frr$. Then from the second point of Prop.~\ref{phiproperty}, for all $n\in\nn$, $\becarre{F}(w_n)(p)\leq \Phi_w^{\pi(w)}(w_n)(p)$.
By taking the $\limsup$, we obtain: $\limsup_{n\to +\infty} \becarre{F}(w_n)(p)\leq \limsup_{n\to +\infty}\Phi_w^{\pi(w)}(w_n)(p)=\Phi_w^{\pi(w)}(w)(p)=\becarre{F}(v)(p)$. 
\end{proof}
\subsection{Policy Iteration}

Next, we describe the policy iteration algorithm. We suppose that we have a post-fixpoint $w^0$ of $\becarre{F}$
in $\frr$. 

\begin{figure}[h] {\small
\begin{algorithm}[H]
  \SetKwInOut{Input}{input}\SetKwInOut{Output}{output}
 \Input{$w^0\in\frr$, a post-fixpoint of $\becarre{F}$}
 \Output{a fixpoint $w = \becarre{F}(w)$ if $\forall\, k\in\nn$, $w^k\in\FS$ or a post-fixpoint otherwise } 
 k=0\; \label{line1}
 \While{fixpoint not reached}{ \label{line2}
  \Begin(compute the next policy $\pi$ for the current iterate $w^k$){
    Compute $\becarre{F}(w^k)$ using Eq.~\eqref{relaxedfunctional} and Eq.~\eqref{relaxedpoly}\;\label{line4}
    \eIf{$w^k\in\FS$}{ 
    Define $\pi(w^k)$ \;\label{line5}
    }
    {
    return $w^k$\;
    }
  }
  \Begin(compute the next iterate $w^{k+1}$){
    Define $\Phi_{w^k}^{\pi(w^k)}$ and compute the least fixpoint $w^{k+1}$ of $\Phi_{w^k}^{\pi(w^k)}$ from Problem~\eqref{linearprogram}\;\label{line6} 
    k=k+1;\label{line7}
  }
 }

\end{algorithm}
}
 \caption{SOS-based policy iteration algorithm for PPS programs.}
 \label{pidefinition}
\end{figure}

We detail step by step the algorithm presented in
Figure~\ref{pidefinition}. At Line~\ref{line1}, the algorithm is initialized and thus $k=0$.
At Line~\ref{line4}, we compute $\rel{F}(w^k)$ using Eq.~\eqref{relaxedfunctional} and solve the SOS problem involved in Eq.~\eqref{relaxedpoly}.
At Line~\ref{line5}, if for all $i\in\ind$ and for $p\in\pp$, the SOS problem involved in Eq.~\eqref{relaxedpoly} has an optimal solution, 
then a policy $\pi$ is available and we can choose any optimal solution of SOS problem involved in Eq.~\eqref{relaxedpoly} as policy.
If an optimal solution does not exist then the algorithm stops and return $w^k$. Now, if a policy $\pi$ has been defined,
the algorithm goes to Line~\ref{line6} and we can define $\Phi_{w^k}^{\pi(w^k)}$ following Eq.~\eqref{auxiliaryfixpoint}.
Then, we solve LP problem~\eqref{linearprogram} and define the new bound on templates $w^{k+1}$ as the smallest fixpoint of
$\Phi_{w^k}^{\pi(w^k)}$. Finally, at Line~\ref{line7}, $k$ is incremented.

If for some $k\in\nn$, $w^{k}\notin \FS$ and $w^{k-1}\in\FS$ then the algorithm stops and returns $w^k$. Hence, we set for all $l\geq k$, $w^l=w^k$.
\begin{theorem}[Convergence result of the algorithm presented in Figure~\ref{pidefinition}]
\label{maintheorem}
The following statements hold:
\begin{enumerate}
\item For all $k\in\nn$, $w^k\in\frr$ and $\becarre{F}(w^k)\leq w^k$;
\item The sequence $(w^k)_{k\geq 0}$ generated by Algorithm~\ref{pidefinition} is decreasing and converges; 
\item Let $w^\infty=\lim_{k\to +\infty} w^k$, then $\becarre{F}(w^\infty)\leq w^\infty$. Furthermore, if 
for all $k\in\nn$, $w^k\in\FS$ and if $w^\infty\in\FS$ then $\becarre{F}(w^\infty)=w^\infty$. 
\end{enumerate}
\end{theorem}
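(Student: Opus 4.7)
The plan is to proceed by induction and then pass to the limit, relying throughout on the properties of the affine selectors $\Phi_{w^k}^{\pi(w^k)}$ established in Proposition~\ref{phiproperty}. The two facts that do all the work are: (i) $\Phi_w^{\pi(w)}$ coincides with $\becarre{F}$ at $w$ and dominates it everywhere, and (ii) $w^{k+1}$ is constructed as the least (post-)fixpoint of $\Phi_{w^k}^{\pi(w^k)}$ via the linear program~\eqref{linearprogram}.

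For assertion (1), I would induct on $k$. The base case is the hypothesis on $w^0$. For the inductive step, assume $\becarre{F}(w^k) \leq_\F w^k$. Property~3 of Proposition~\ref{phiproperty} gives $\Phi_{w^k}^{\pi(w^k)}(w^k) = \becarre{F}(w^k) \leq_\F w^k$, so $w^k$ is a post-fixpoint of the monotone map $\Phi_{w^k}^{\pi(w^k)}$. By minimality of $w^{k+1}$ as least fixpoint, one obtains $w^{k+1} \leq_\F w^k$, and $w^{k+1}$ is a genuine fixpoint. Applying property~2 of Proposition~\ref{phiproperty} at $v = w^{k+1}$ then yields $\becarre{F}(w^{k+1}) \leq_\F \Phi_{w^k}^{\pi(w^k)}(w^{k+1}) = w^{k+1}$, closing the induction. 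Finiteness of $w^{k+1}$ follows because $w^k$ is a feasible solution of~\eqref{linearprogram} and the objective is bounded below by $\sum_p \becarre{{\xin}}(p)$. Assertion~(2) is then immediate: the argument above already delivers $w^{k+1} \leq_\F w^k$, and for every $p\in\pp$ and $k\in\nn$ one has $w^k(p) \geq \becarre{F}(w^k)(p) \geq \becarre{{\xin}}(p)$, so the decreasing sequence of reals $(w^k(p))_{k\in\nn}$ is bounded below and therefore converges.

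For assertion (3), set $w^\infty := \lim_k w^k \in \frr$. To obtain $\becarre{F}(w^\infty) \leq_\F w^\infty$, I would invoke monotonicity of $\becarre{F}$ (Proposition~\ref{monotonicity}): since $w^\infty \leq_\F w^k$, we get $\becarre{F}(w^\infty)(p) \leq \becarre{F}(w^k)(p) \leq w^k(p)$ for every $p$ and $k$, and passing to the limit in $k$ yields $\becarre{F}(w^\infty) \leq_\F w^\infty$. For the reverse inequality under the $\FS$ hypotheses, I use the fixpoint identity $w^{k+1} = \Phi_{w^k}^{\pi(w^k)}(w^{k+1})$ together with $w^{k+1} \leq_\F w^k$ and monotonicity of $\Phi_{w^k}^{\pi(w^k)}$ (property~1), which chain into $w^{k+1} \leq_\F \Phi_{w^k}^{\pi(w^k)}(w^k) = \becarre{F}(w^k)$. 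Taking $k\to +\infty$ and applying the upper-semicontinuity of $w\mapsto\becarre{F}(w)(p)$ on $\FS\cap\frr$ (established in the proposition immediately preceding the theorem), one obtains $w^\infty(p) = \lim_k w^{k+1}(p) \leq \limsup_k \becarre{F}(w^k)(p) \leq \becarre{F}(w^\infty)(p)$. Combined with the previous inequality, this yields $\becarre{F}(w^\infty) = w^\infty$.

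The main obstacle is the reverse inequality in the last step: monotonicity alone produces only one direction, so one must exploit the upper-semicontinuity of the relaxed functional. This is precisely why the $\FS$ hypotheses on the iterates and on $w^\infty$ appear: they guarantee that the affine selectors $\phi_{w^k,i,p}^{\pi_\lambda(w^k)(i,p)}$ exist and that $\becarre{F}(\cdot)(p)$ is upper-semicontinuous at the limit, legitimating the passage under $\limsup$. Everything else is a careful bookkeeping of Tarski minimality and monotonicity.
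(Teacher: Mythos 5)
Your proof is correct and follows essentially the same route as the paper: induction using Proposition~\ref{phiproperty} (coincidence at $w^k$, domination of $\becarre{F}$, Tarski minimality of the LP solution, lower bound by $\becarre{{\xin}}$) for assertions (1)--(2), then monotonicity of $\becarre{F}$ for $\becarre{F}(w^\infty)\leq_\F w^\infty$ and the chain $w^{k+1}=\Phi_{w^k}^{\pi(w^k)}(w^{k+1})\leq_\F \Phi_{w^k}^{\pi(w^k)}(w^k)=\becarre{F}(w^k)$ combined with upper-semicontinuity for the reverse inequality in assertion (3). The only detail the paper adds and you omit is the explicit handling of early termination: when some $w^k\notin\FS$ the sequence is frozen at $w^k$, so (1)--(2) hold trivially there, whereas your inductive step implicitly assumes a policy exists at every $w^k$; this is a one-line remark rather than a genuine gap.
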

\begin{proof}
1.  We reason by induction. We have $\rel{F}(w^0)\leq w^0$ and $w^0\in \frr$ by assumption. 
Now suppose that for some $k\in\nn$, $\rel{F}(w^k)\leq w^k$ and $w^k\in\frr$. If $w^k\notin\FS$ then $w^{l}=w^k$ for all $l\geq k$ and then we have proved the result. Now suppose that 
$w^k\in\FS$ and let us take $\pi\in\Pi$ such that $\Phi_{w^k}^{\pi(w^k)}(w^k)=\becarre{F}(w^k)$. From induction property $\Phi_{w^k}^{\pi(w^k)}(w^k)\leq w^k$ and thus $w^k$ is a post-fixpoint of $\Phi_{w^k}^{\pi(w^k)}$ belonging to $\frr$. Since every post-fixpoint of $\Phi_{w^k}^{\pi(w^k)}$ is greater than $\becarre{{\xin}}$ then least fixpoint of $\Phi_{w^k}^{\pi(w^k)}$ is finite valued and thus it is 
the optimal solution $w^{k+1}$ of Problem~\eqref{linearprogram}. Moreover from the second point of Prop.~\ref{phiproperty}, $\becarre{F}(w^{k+1})\leq_\F \Phi_{w^k}^{\pi(w^k)}(w^{k+1})$ and since $w^{k+1}$ is the least fixpoint of $\Phi_{w^k}^{\pi(w^k)}(w^{k+1})$ then $\becarre{F}(w^{k+1})\leq w^{k+1}$. This completes the proof and for all $k\in\nn$, $w^k\in\frr$ and $\becarre{F}(w^k)\leq w^k$.

2. Let $k\in\nn$. If $w^k\notin \FS$ then $w^{k+1}=w^k\leq w^k$. 
Now suppose that $w^k\in \FS$ and let $\pi\in\Pi$ such that $\Phi_{w^k}^{\pi(w^k)}(w^k)=\becarre{F}(w^k)$, then from the third point of Prop.~\ref{phiproperty}, $\Phi_{w^k}^{\pi(w^k)}(w^k)=\becarre{F}(w^k)\leq w^k$; the inequality results from the first assertion. Then $w^k$ is a post-fixpoint of $\Phi_{w^k}^{\pi(w^k)}$. Since $w^{k+1}$ is the least fixpoint of $\Phi_{w^k}^{\pi(w^k)}$ and $\Phi_{w^k}^{\pi(w^k)}$ is monotone then from 
Tarski's theorem $w^{k+1}\leq w^k$. From the first point, for all $k\in\nn$, $w^k\in\frr$. Moreover by definition of $\becarre{F}$, $\becarre{{\xin}}\leq \becarre{F}(w^k)$ for all $k\in \nn$, then 
from the first point $(w^k)_{k\geq 0}$ is lower bounded then it converges to some $w^\infty$. 

3. If for some $k$, $w^{k}\notin \FS$ and $w^{k-1}\in\FS$, then $w^\infty=w^k$ and we have $\becarre{F}(w^\infty)\leq w^\infty$
from the first point. Now suppose that for all $k\in\nn$, $w^k\in\FS$. Since $\becarre{F}$ is monotone then for all $k\in\nn$, $\becarre{F}(w^\infty)\leq \becarre{F}(w^k)\leq w^k$ from the first point. Now taking the limit of the right-hand side, we get ${F}(w^\infty)\leq w^{\infty}$. Now, let $k\in\nn$ and let $\pi\in\Pi$ such that $\Phi_{w^k}^{\pi(w^k)}(w^k)=\becarre{F}(w^k)$. From the second point, $w^{k+1}\leq w^k$ and from the monotonicity of $\Phi_{w^k}^{\pi(w^k)}$, we have $w^{k+1}=\Phi_{w^k}^{\pi(w^k)}(w^{k+1})\leq \Phi_{w^k}^{\pi(w^k)}(w^k)=\becarre{F}(w^k)$. By taking the $\limsup$ on $k$, we get $w^\infty\leq \limsup_{k\to +\infty}\becarre{F}(w^k)$. As $\becarre{F}$ is upper-semicontinuous on $\FS\cap \frr$ then, if $w^\infty\in\FS$, $w^\infty\leq \limsup_{k\to +\infty}\becarre{F}(w^k)\leq \becarre{F}(w^\infty)$ and so $w^\infty=\becarre{F}(w^\infty)$.
\end{proof}
\subsection{Initialization and templates choice}
\label{initsub}
In Section~\ref{sec:basicinductive-domainsSOS}, we have made the assumption that the template basis was given by an oracle. 
Moreover, in Algorithm~\ref{pidefinition}, we suppose that we have a post-fixpoint $w^0\in\frr$ of $\becarre{F}$. 
Now, we give details about the templates basis choice and the computation of a post-fixpoint 
$w^0 \geq w^\infty$. The templates basis choice relies on the computation of a template basis composed of one element.   
This single template is constructed by the method developed in~\cite{SAS1} and
is then completed using the strategy proposed in~\cite[Ex. 9]{SAS1}. The single template computation also permits us 
to compute $w^0$.
Actually, the method developed in~\cite{SAS1} is constructed by using the definition of being a post-fixpoint of $\becarre{F}$. 
Indeed, suppose that the templates basis is constituted of one template $p$ then $w^0$ is a post-fixpoint 
$\becarre{F}$ if and only if $\becarre{F}(w^0)(p)\leq w^0(p)$. This is equivalent to: 
\[
\becarre{{\xin}}=\inf\{\eta\mid \eta-p+\sum_{j=1}^{n_{\mathrm{in}}} \nu_j^{\mathrm{in}}
r_j^{\mathrm{in}}\in\Sigma[x],\
\nu^{\mathrm{in}}\in\Sigma[x]^{n_\mathrm{in}}\}\leq w^0 \,,
\]
and for all $i\in\ind$:
\[
\begin{array}{lll}
 \mybrackets{\becarre{F_i}(w^0)}(p) =  &\displaystyle{\inf_{\lambda,\mu,\gamma,\eta}} \eta &\leq  w^0 \,.\\
&\st
\left\{
\begin{array}{l}
\displaystyle{\eta- p\circ T^i-\lambda^i(w^0 - p) + \langle\mu, r^i\rangle + \langle\gamma, r^0\rangle} \in\Sigma[x]\\
\lambda^i\geq 0,\ \mu\in\Sigma[x]^{n_i},\ \gamma\in\Sigma[x]^{n_0},\ \eta\in \rr
\end{array}
\right.&
\end{array}
\]  
By definition of the infimum, it is equivalent to the existence of $\nu^{\mathrm{in}}\in\Sigma[x]$ and 
for all $i\in\ind$ of $\lambda^i\geq 0$, $\mu^i\in\Sigma[x]^{n_i}$, 
$\gamma^i\in\Sigma[x]^{n_0}$ such that:
\begin{equation}
\label{templateeq}
\begin{array}{c}
w^0-p+\sum_{j=1}^{n_{\mathrm{in}}} \nu_j^{\mathrm{in}} r_j^{\mathrm{in}}\in\Sigma[x]\\
w^0- p\circ T^i-\lambda^i(w^0 - p) + \langle\mu, r^i\rangle + \langle\gamma, r^0\rangle \in\Sigma[x]
\,.
\end{array}
\end{equation}
Now to find a template, it suffices to find $p$ such that Eq.~\eqref{templateeq} holds. 
However, the following two issues remain. 

First, without an objective function, $p=0$ is a solution of Eq.~\eqref{templateeq}.
A workaround to avoid this trivial solution consists of optimizing a certain objective function under the constraints given in Eq.~\eqref{templateeq}. In~\cite{SAS1}, a similar optimization procedure (Problem (13) of~\cite{SAS1}) is used to prove a property of the form $\rea\subseteq \{x\in\rd\mid \kappa(x)\leq \alpha\}$, for a given real-valued function $\kappa$.
Here, we are interested in proving the boundedness of the reachable value set, which corresponds to minimize $\alpha$ with $\kappa=\norm{\cdot}_2^2$.

Second, finding $\lambda^i$ and $p$ satisfying Eq.~\eqref{templateeq} boils down to solving a bilinear SOS problem, which is not easy to handle in practice. Thus, we fix $\lambda^i=1$ as in Lyapunov equations. 
We also take $w^0=0$ since $p$ has a constant part.
Finally, to obtain a template $p$, we solve the following SOS problem:
\begin{align}
\label{polsynthesis}
\begin{aligned}
\inf_{p\in \rr[x]_{2m}, w\in\rr} & \quad w \enspace, \\[-1em]			 
\text{s.t.}  & \quad - p = \sigma_0 - \sum_{j=1}^{n_{\mathrm{in}}} \sigma_jr_j^{\mathrm{in}}  \enspace , \\[-1em]
& \quad \forall\, i\in\ind,\ \displaystyle{p-p \circ T^i= \sigma^i - \sum_{j=1}^{n_i}\mu_j^i r_j^i - \sum_{j=1}^{n_0}\gamma_j^i r_j^0}  \enspace , \\[-1em]
& \quad \displaystyle{w + p -\norm{\cdot}_2^2 = \psi} \enspace , \\
& \quad \forall\, j=1,\ldots, n_{\mathrm{in}} \enspace,\ \sigma_j\in\Sigma[x]\enspace ,\ \deg (\sigma_j r_j^{\mathrm{in}})  \leq 2m\enspace,\\
& \quad \sigma_0\in\Sigma[x]\enspace ,\ \deg (\sigma_0)  \leq 2m\enspace,\\
& \quad \forall\, i\in\ind \enspace ,\ \sigma^i\in \Sigma[x]\enspace ,\ \deg (\sigma^i) \leq 2 m \deg T^i \enspace,\\
& \quad \forall\, i\in\ind \enspace ,\ \forall\, j=1,\ldots, n_i \enspace,\ \mu_j^i\in\Sigma[x]\enspace ,\ \deg (\mu_j^i r_j^i)  \leq 2 m \deg T^i \enspace ,\\
& \quad \forall\, i\in\ind\enspace ,\ \forall\, j=1,\ldots, n_0\enspace ,\ \gamma^i\in \Sigma[x]\enspace ,\ \deg (\gamma_j^i r_j^0)  \leq 2 m \deg T^i  \enspace ,\\
& \quad \psi \in \Sigma[x] \enspace,\ \deg (\psi) \leq 2 m  \enspace .\\
\end{aligned} 
\end{align}
Let $(p,w)$ be a solution of Problem~\eqref{polsynthesis}. In ~\cite[Prop. 1]{SAS1}, we proved that the set $\{x\in\rd\mid p(x)\leq 0\}$ defines an inductive invariant. To complete the template basis, we use the strategy proposed in~\cite[Ex. 9]{SAS1}, that is, we work with the templates basis $\{x\mapsto x_i^2, i\in\iset{d}\}\cup\{p\}$. We thus use the inductive invariant set $\{p(x)\leq 0, x_i^2\leq w\}$
as initialization i.e. the initial bound is $w^0(q)=w$ if $q\neq p$ and $w^{(0)}(q)=0$ if $q=p$. 
As opposed to the approach of~\cite{SAS1}, we avoid increasing the degree of polynomial $p$ to obtain better bounds on the reachable values set.
\paragraph{Computational considerations}
 The number of (a-priori unknown) coefficients of the polynomial $p$ (of degree $2m$ and $d$ variables) appearing in Problem~\ref{polsynthesis} is $\binom{2 m + d}{d}$. Similarly, the number of coefficients of each $\sigma^i$ (resp.~$\mu_j^i$ and $\gamma_j^i$) is $\binom{2 m \deg T^i + d}{d}$.
Thus, Problem~\ref{polsynthesis} can be reformulated as an SDP program involving $\binom{2 m + d}{d} + \sum_{i \in \ind} [1+n_i + n_0]  \binom{2 m \deg T^i + d}{d}$ SDP variables. 
Therefore, our framework is expected to be tractable when either $d$ or $m$ is small.
As mentioned in~\cite[Section 4]{SAS1}, one could address bigger instances while exploiting sparsity properties of the initial system, as in~\cite{Waki06sumsof}.
%

\section{Experiments}
\label{sec:exampleSOS}
\subsection{Details of the running Example.}
Recall that our running example is given by the following PPS:
$(\xin,X^0$, $\{X^1,X^2\}$, $\{T^1,T^2\})$, where:\\
\[
\begin{array}{lcl}
  \begin{array}{l}
    \xin= [-1, 1] \times [-1, 1]\\
    X^0=\rr^2
  \end{array}
  &  \text{ and }& 
  \left\{
    \begin{array}{l}
      X^1=\{x\in\rr^2\mid -x_1^2+1\leq 0\} \\
      X^2= \{x\in\rr^2\mid x_1^2-1< 0\}    
    \end{array}
  \right.
\end{array}
\]
  \\
\noindent
and the functions relative to the partition $\{X^1,X^2\}$ are:
\[
\begin{array}{c}
T^1(x_1,x_2) =\begin{pmatrix} 
0.687x_1+0.558x_2-0.0001x_1 x_2 \\ 
-0.292x_1+0.773x_2
\end{pmatrix}\\
\text{and}\\
T^2(x_1,x_2) =\begin{pmatrix} 
0.369x_1+0.532x_2-0.0001x_1^2\\
-1.27x_1+0.12x_2-0.0001x_1x_2
\end{pmatrix}
\,.
\end{array}
\]
The first step consists in constructing the template basis and compute the template $p$ and bound $w$ on the reachable values as a solution of Problem~\eqref{polsynthesis}. We fix the degree of $p$ to 6. 
The template $p$ generated from Matlab is of degree 6 and is equal to
 \[
 \begin{array}{l}
 -1.931348006+3.5771x_1^2+2.0669x_2^2+0.7702x_1x_2-(2.6284\text{e--}4 )x_1^3\\
 -(5.5572\text{e--}4)x_1^2x_2+(3.1872\text{e--}4)x_1x_2^2+0.0010x_2^3-2.4650x_1^4-0.5073x_1^3x_2\\
 -2.8032x_1^2x_2^2-0.5894x_1x_2^3-1.4968x_2^4+(2.7178\text{e--}4)x_1^5+(1.2726\text{e--}4)x_1^4x_2\\
 -(3.8372\text{e--}4)x_1^3x_2^2+(6.5349\text{e--}5)x_1^2x_2^3+(5.7948\text{e--}6)x_1x_2^4-(6.2558\text{e--}4)x_2^5\\
 +0.5987x_1^6-0.0168x_1^5x_2+1.1066x_1^4x_2^2+0.3172x_1^3x_2^3+0.8380x_1^2x_2^4+0.0635x_1x_2^5\\
 +0.4719x_2^6 \,.
 \end{array}
 \]
The upper bound $w$ is equal to $2.1343$. As suggested in Section~\ref{initsub}, we can take the template basis $\pprun=\{p,x\mapsto x_1^2, x\mapsto x_2^2\}$.  We write $q_1$ for $x\mapsto x_1^2$ and $q_2$ for $x\mapsto x_2^2$.
The basic semi-algebraic $\{x\in\rr^2\mid p(x)\leq 0,\ q_1(x)\leq 2.1343,\ q_2(x)\leq 2.1343\}$ is an inductive invariant and the corresponding bounds function is $w^0=(w^0(q_1),w^0(q_2),w^0(p))=(2.1343,2.1343,0)$.

As in Line~\ref{line4} of Algorithm~\ref{pidefinition}, we compute the image of $w^0$ by $\becarre{F}$ using SOS (Eq.~\eqref{relaxedpoly}). We found that 
\[
\becarre{F}(w^0)(q_1)=1.5503,\ \becarre{F}(w^0)(q_2)=1.9501 \text{ and } \becarre{F}(w^0)(p)=0\enspace .
\]
Since $w^0\in\FSrun$, Algorithm~\ref{pidefinition} goes to Line~\ref{line5} and the computation of $\becarre{F}(w^0)$ permits to determine a new policy $\pi(w^0)$. The important data is the vector $\lambda$. For example, 
for $i=1$ and the template $q_1$, the vector $\lambda$ is $(0,0,2.0331)$. It means that we associate
for each template $q$ a weight $\lambda(q)$. In the case of $\lambda=(0,0,2.0331)$, $\lambda(q_1)=0$, 
$\lambda(q_2)=0$ and $\lambda(p)=2.0332$. For $i=1$, the template $q_1$ and the bound vector $w^0$, 
the function $\phi_{w^0,1,q_1}^{\lambda}(v)=2.0331 v(p)+1.5503$. 

To get the new invariant, Algorithm~\ref{pidefinition} goes to Line~\ref{line6} and we compute a bound vector $w^1$ solution of Linear Program~\eqref{linearprogram}.
In this case, it corresponds to the following LP problem:
 \[
 \underset{
 1\leq v(q_1),\ 1\leq v(q_2),\ 0\leq v(p),\qquad (\text{init})}{
  \underset{
 0.4578 v(p)+0.8843\leq v(q_1),\ 0.2048v(p)+1.9501\leq v(q_2),\ 0.9985 v(p)-3.4691\text{e--}7\leq v(p),\qquad (i=1)}{
 \underset{
 2.0331 v(p)+1.5503\leq v(q_1),\ 1.0429 v(p)+1.2235\leq v(q_2),\ 0.9535 v(p)-0.0248\leq v(p),\qquad (i=2)}
 {\operatorname{\min} v(q_1)+v(q_2)+v(p)}}}
 \]
We obtain:
\[
w^1(q_1)=1.5503,\ w^1(q_2)=1.9501 \text{ and } w^1(p)=0\enspace.
\]
Then, we come back to Line~\ref{line4} of Algorithm~\ref{pidefinition} and we compute $\becarre{F}(w^1)$ using the SOS program Eq.~\eqref{relaxedpoly}. The implemented stopping rule is $\norm{\becarre{F}(w^k)-w^k}_{\infty}\leq 1\text{e--}6$ and 
since $\norm{\becarre{F}(w^1)-w^1}_{\infty}\leq 1\text{e--}6$,
Algorithm~\ref{pidefinition} terminates. The computed sets are presented in
Figure~\ref{invs}, page~\pageref{invs}. Figure~\ref{fig:more} presents the
semialgebraic sets obtained with higher dimensional templates, up to degree
10. Results are similar but could lead to different numbers of iterations
depending on the degree. In case of multiple iterations, the final value is also
reached at iterate 1 and is slightly modified by following iterations.

\begin{figure}[h]
  \begin{center}
    \includegraphics[width=.5\textwidth]{./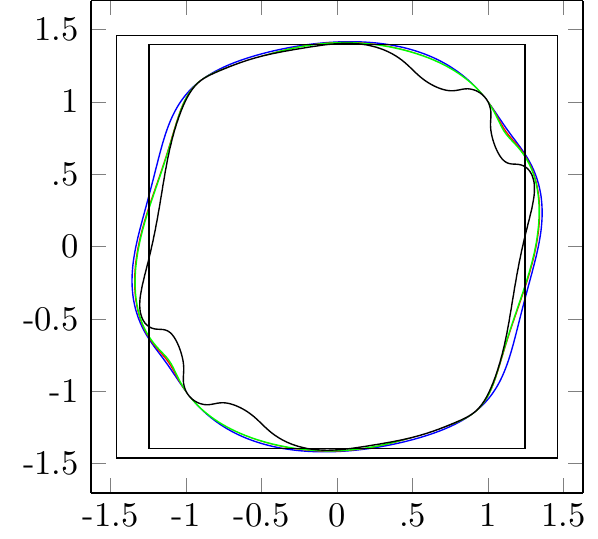}
  \end{center}
The semialgebraic sets denotes templates of degree 7 (blue), 8 (red), 9 (green) and 10 (black). The first box denotes the initial bounds as obtained in figure~\ref{invs}. The
second one is the one obtained after 1 iterations. Except degree 7 that
converged in 4 iterations, all others converged in 1. Degree 10 faced numerical
issues and did not allow to refine the bounds without errors.
\caption{Different templates and associated bounds computed with Policy Iterations}
\label{fig:more}
\end{figure}
\subsection{Benchmarks.}
The presented analysis has been applied to available examples of the control
community literature: piecewise linear systems, polynomial systems, etc. We
gathered the examples matching our criteria: discrete systems, possibly
piecewise, at most polynomial. In all the considered cases, no common quadratic
Lyapunov existed. In other words, not only the existing linear abstractions such
as intervals or polyhedra would fail in computing a non trivial post-fixpoint,
but also the existing analyses dedicated to digital filters such
as~\cite{DBLP:conf/esop/Feret04,DBLP:conf/esop/GawlitzaS07,DBLP:journals/corr/abs-1111-5223,DBLP:conf/hybrid/RouxJGF12}.

The analysis has been implemented in Matlab and relies on the Mosek SDP
solver~\cite{mosek}, through the Yalmip~\cite{YALMIP} SOS front-end. Without
outstanding performances, all experiments are performed within a few seconds per
iteration, which makes us believe that a more serious implementation would
perform better. We recall that the analysis could be interrupted at any point,
still providing a safe upper bound.

We next present the examples handled by our SOS policy iteration algorithm: 

\begin{example}
\label{example1}
The following example corresponds to~\cite[Ex. 2.1]{DBLP:journals/tac/Feng02} and represents a piecewise linear system with 2 cases handling 3 variables. 
The initial set is:
\[
\xin = [-1, 1]^3\enspace.
\]
The set where the state-variable lies is:
\[ 
X^0 = \rr^3\enspace .
\]
The sets defining the partition of the state-space are:
\[
\begin{array}{lr}
X^1 = \{ (x,y,z) \in \rr^3 | x \leq 0 \}, & 
X^2 = \{ (x,y,z) \in \rr^3 | x > 0 \}\enspace .
\end{array}
\]
Finally the dynamics associated to the partition are:
\[
\begin{array}{lr}
T^1(x,y,z) = \begin{pmatrix} x + 0.5 y\\ -0.3 x+0.8 y\\0.4 z\end{pmatrix}, &
T^2(x,y,z) = \begin{pmatrix}x+.4 y+0.01 z\\ -0.1 x+0.8 y\\0.5 z\end{pmatrix}\enspace .
\end{array}
\]
\end{example}

\begin{example}
\label{example2}
We consider the example~\cite[Ex. 3.3]{DBLP:journals/tac/Feng02} which describes a piecewise linear system with 4 cases handling 2 variables.
The initial set is:
\[
\xin = [-1, 1]^2\enspace .
\]
The set where the state-variable lies is:
\[ 
X^0 = \rr^2\enspace .
\]
The sets defining the partition of the state-space are:
\[
\begin{array}{lr}
X^1 = \{ (x,y) \in \rr^2 | x \leq -1 \},& 
X^2 = \{ (x,y) \in \rr^2 | x \in ]-1,1] \land y > 0 \},\\
X^3 = \{ (x,y) \in \rr^2 | x \in ]-1,1] \land y \leq 0 \},&
X^4 = \{ (x,y) \in \rr^2 | x > 1 \}\enspace .
\end{array}
\]
Finally the dynamics associated to the partition are:
\[
\begin{array}{lr}
T^1(x,y) = \begin{pmatrix}0.9 x-0.01 y\\ 0.1 x+y -0.02\end{pmatrix},&
T^4(x,y) = \begin{pmatrix}0.9 x-0.01 y\\ 0.1 x+y + 0.02\end{pmatrix},\\
\multicolumn{2}{c}{T^2(x,y) = T^3(x,y) = \begin{pmatrix}x-0.02 y\\ 0.02 x+0.9 y\end{pmatrix}}\enspace .\\
\end{array}
\]
\end{example}


\begin{example}
\label{example3}
The following example is the piecewise quadratic system with 2 cases handling 2 variables~\cite[Ex. 3]{DBLP:conf/cdc/AhmadiJ13}.
The initial set is:
\[
\xin = [-1, 1]^2\enspace .
\]
The set where the state-variable lies is:
\[ 
X^0 = \rr^2\enspace .
\]
The sets defining the partition of the state-space are:
\[
\begin{array}{lr}
X^1 = \{ (x,y) \in \rr^2 | -x^4+x^2-1 \leq 0 \},&
X^2 = \{ (x,y) \in \rr^2 | x^4-x^2 +1< 0 \}\enspace .
\end{array}
\]
Finally the dynamics associated to the partition are:
\[
\begin{array}{lr}
T^1(x,y) = \begin{pmatrix}0.687 x+0.558 y-0.0001 x y\\-0.292 x+0.773 y\end{pmatrix},&
T^2(x,y) = \begin{pmatrix}0.369 x+0.532 y-0.0001 x^2\\-1.27 x+0.12 y-0.0001 x y\end{pmatrix}\enspace .
\end{array}
\]
\end{example}

\begin{example}
\label{example4}
The following example is the hand-crafted piecewise polynomial of degree 3 with 2 cases developed in~\cite{SAS1}.
The initial set is:
\[
\xin = [0.9, 1.1] \times [0, 0.2]\enspace .
\]
The set where the state-variable lies is:
\[ 
X^0 = \rr^2\enspace .
\]
The sets defining the partition of the state-space are:
\[
\begin{array}{lr}
X^1 = \{ (x,y) \in \rr^2 | x^2 + y^2 \leq 1  \},&
X^2 = \{ (x,y) \in \rr^2 | x^2 + y^2 > 1 \}\enspace .
\end{array}
\]
Finally the dynamics associated to the partition are:
\[
\begin{array}{lr}
T^1(x,y) = \begin{pmatrix}x^2 + y^3\\ x^3 + y^2\end{pmatrix},&
T^2(x,y) = \begin{pmatrix}0.5   x^3 + 0.4   y^2\\ -0.6   x^2 + 0.3   y^2\end{pmatrix}\enspace .
\end{array}
\]
\end{example}
The table~\ref{tab:exp} summarizes the examples considered, the bounds obtained,
the degree of the polynomial templates and the number of iterations
performed before reaching the fixpoint.

\begin{table}[ht]
{  \centering
  \begin{tabularx}{\textwidth}{Xccc}
    \hline Examples & Bounds (ie. $x_i^2$) & Degree & \# it.\\ \hline
    \multirow{5}{6cm}{Running example~\ref{running}} & No good invariant & 4 & $-$\\
    &   $[1.5503, 1.9501]$ & 6 & 1 \\
    & $[1.5503,     1.9502]$ & 8 & 7 \\
    &  $[ 1.5500,  1.9436]$ & 10 & 1 \\
    & $[ 1.5503 ,   1.9383]$ & 12 & 2 \\ \hline
    \multirow{4}{6cm}{Example~\ref{example1}} & $[
    3.8260, 2.1632, 1.0000 ]$ & 4 &
    1\\
    & $ [3.7482, 1.8503, 1.0000 ]$ & 6 &
    1\\
    & No good invariant & 8,10,12 & $-$
    \\ \hline
    \multirow{4}{6cm}{Example~\ref{example2}} & $[
    1.8359 , 1.3341]$ & 4 & 2
    \\
    & $[1.5854, 1.2574]$ & 6 & 5
    \\
    & $[1.5106 , 1.2569]$ & 8 & 4
    \\
    & $[1.4813, 1.2544]$ & 10 & 6   \\ \hline
    \multirow{5}{6cm}{Example~\ref{example3}} & $[1.5624,1.2396]$ & 4 & $3$\\
    &   $[1.5581, 1.1764]$ & 6 & 1 \\
    & $[1.5531,     1.1511]$ & 8 & 1 \\
    &  No good invariant & 10,12 & $-$ \\ \hline
    \multirow{2}{6cm}{Example~\ref{example4}} &
    No good invariant & 4,6,8,10 & $-$\\
    &  $[ 1.2100, 0.9989]$ & 12 & max (10)  \\

    \hline

  \end{tabularx}
}
  ``No good invariant'' occurs when the template synthesis fails, i.e. does not
  provide a sound post-fixpoint or some numerical issues occurs during the policy iterations phase. 
  It seems to be due to the large size of the SOS problems together with numerical issues related to the interior
  point methods implemented in the relying solvers.
  \caption{Experiments}
  \label{tab:exp}
\end{table}


\section{Conclusion}
\label{sec:concluSOS}
We proposed an extension of policy iteration algorithms, using Sum-of-Squares programming. 
This extension allows to consider the wider class of disjunctive polynomial programs. 
In this new setting, we showed that we keep the advantage of policy iteration algorithms,
while producing a sequence of increasingly safe over-approximations of the reachability set.

As future work, we plan to generalize this algorithm to programs involving non-polynomial updates, including square roots, divisions as well as transcendental functions. The computational method developed in the paper could be also generalized to other classes of nonlinear switched systems involving either random or temporal switching.

\section*{Acknowledgments}

The research leading to these results has partly received funding from the
European Research Council under the European Union’s Seventh Framework Programme
(FP/2007-2013) / ERC Grant Agreement nr. 306595 ``STATOR'', the LabEx PERSYVAL-Lab (ANR-11-LABX-0025-01) funded by the French program ``Investissement d'avenir'',  the RTRA/STAE
BRIEFCASE project grant, the ANR projects INS-2012-007 CAFEIN, and ASTRID
VORACE.

\bibliographystyle{alpha}
\bibliography{policySOSbib}


\end{document}